\newcommand\sectionref[2][Section]{\hyperref[#2]{#1~\ref*{#2}}}
\renewcommand\autopageref[1]{\hyperref[#1]{page~\pageref*{#1}}}
\newcommand\ket[1]		{\left\lvert#1\right\rangle\mspace{-1.5mu}}		
\newcommand\bra[1]		{\mspace{-1.5mu}\left\langle#1\right\rvert}		
\newcommand\bracket[2]	{\left\langle#1\mspace{-1mu}\left|\mspace{1mu}#2\right\rangle\right.}	
\newcommand\card[1]		{\left\lvert #1 \right\rvert}							
\newcommand\ens[1]		{\left\{#1\right\}}										
\newcommand\paren[1]		{\left( #1 \right)}		     						 	
\newcommand\sqparen[1]	{\left[ #1 \right]}		   					  		
\newcommand\anglebr[1]	{\left\langle #1 \right\rangle}
\newcommand\sox[1]{^{\ox #1}}
\let\H\relax
\newcommand\cH[1][2]{\H_{#1}}
\let\L\relax
\newcommand{\rr}{\mathbbm{R}}
\newcommand\ssfStyle[1]{{\scalefont{0.9}\textmd{\textup{\textsf{#1}}}}}
\let\P\relax
\def\unqm2sat{\textsc{unique quantum 2-sat}}
\def\qm#1sat{\textsc{quantum ${#1}$-sat}}
\renewcommand{\vec}[1]{\text{\boldmath$#1$}}
\newtheorem{lemma}{Lemma}
\newtheorem{proposition}{Proposition}
\newcounter{descenum}
\newcounter{inlinum}
\renewcommand\theinlinum{\textup{(\alph{inlinum})}}
	\def\@after@inlinum{%
		\catcode`\^^M=10%
		\gdef\@tempa{\catcode`\^^M=5}%
		\expandafter\@tempa\ignorespaces}%
\def\id{\mathbbm{1}}
\newtheorem*{lemma*}{Proposition}
\renewcommand\eqref[1]{\hyperref[#1]{Eq.~\textup{\tagform@{\ref*{#1}}}}}
\begin{document}

\title{Ground states of unfrustrated spin Hamiltonians satisfy an area law}

\author{Niel de Beaudrap}
\affiliation{Institute of Physics and Astronomy, University of Potsdam, 14476 Potsdam, Germany}

\author{Tobias J.\ Osborne}

\affiliation{Institute for Advanced Study Berlin, 14193 Berlin, Germany}

\author{Jens Eisert} 

\affiliation{Institute of Physics and Astronomy, University of Potsdam, 14476 Potsdam, Germany}
\affiliation{Institute for Advanced Study Berlin, 14193 Berlin, Germany}

\date{\today}

\vspace*{-.6cm}

\begin{abstract}
	We show that ground states of unfrustrated quantum spin-$1\!/2$ systems on general lattices satisfy an entanglement 
	area law, provided that the Hamiltonian can be decomposed into nearest-neighbor interaction terms which have entangled excited states.
	The ground state manifold can be efficiently described as the image of a low-dimensional subspace of low Schmidt measure, 
	under an efficiently contractible tree-tensor network.
	This structure gives rise to the possibility of efficiently simulating the complete ground space (which is in general degenerate).
	We briefly discuss ``non-generic'' cases, including highly degenerate interactions with product eigenbases, using a relationship to percolation theory.
	We finally assess the possibility of using such tree tensor networks to simulate almost frustration-free spin models.
\end{abstract}
\maketitle

\section{Introduction}

An important insight in the study of quantum many-body systems is related to
the observation that common states that naturally occur do not quite exhaust the entire
Hilbert space available to them, but instead a much smaller subspace. This 
insight is at the heart of powerful numerical methods that have 
been devised in recent years. Ideas such as the density-matrix renormalization group
approach, and new ideas that allow for the simulation of higher-dimensional quantum
lattice models \cite{Review,Scholl,2d}, work exactly because they model well quantum 
states that in a certain sense have little entanglement.
More precisely, the states which are tractible by these approaches satisfy what is called
an \emph{area law} \cite{Review,Wilczek,Bombelli,Srednicki,Harmonic,Latorre,Area1,Area2,Jin,Cardy,Fermi1,Fermi2,Fermi3,Hastings,Masanes}, 
so the entropy of a subregion scales at most as the boundary area of
that region (for a review, see Ref.~\cite{Review}). 
For practical purposes, and in particular for 1D systems, these methods
in particular give accurate accounts of ground state properties. 

Now, not all ground states of local quantum lattice models can be efficiently approximated. 
This holds even true for 1D chains: indeed, one can construct models for which approximating 
the ground state energy
is provably \NP-hard \cite{NPH} --- albeit using a fairly sophisticated construction involinvg large local dimensions~\cite{Power}. 
An important feature of these constructions is that the difficulty of their 
solution appears to be strongly related to whether the system is \emph{frustrated} or not.
This suggests that whether or not the system is frustrated is another criterion for whether a 
quantum lattice model should be considered ``easy''  or ``hard'', 
in addition to its ground states having ``a lot'' or ``little'' entanglement.
This intuition that frustrated  systems should be hard to simulate is indeed true for
classical systems, where the frustrated or glassy models are the hard ones to 
describe. For quantum systems, there is evidence that the situation should be more complex
\cite{Shor}.

In this work, we explore a class of models where the intuition of frustration-free models being 
\emph{easy to solve} holds true.
Building upon work in Ref.~\cite{Bravyi06} and in Ref.\ \cite{OurPRL}, 
for a natural class of two-local Hamiltonians acting on spin-$1/2$ particles (simply ``spins'' henceforth), we show that ground states can be reduced to a completely characterized and low-dimensional subspace, and then re-constructed by identifying the ground state-space of each interaction of the Hamiltonian term-by-term.
Specifically, the ground space is the image of a symmetric subspace under an explicitly constructible, and efficiently contractible, tensor network.
It follows that the ground states satisfy an area law, and hence contain little entanglement in the above sense.
This generalizes recent results regarding the \emph{existence} of states which have little entanglement in the 
ground-state manifolds of such Hamiltonians~\cite{NoGoMBQC}.
We discuss how to efficiently simulate the ground state manifold, and suggest how this 
could be used to simulate ``almost'' frustration-free quantum lattice models.

\section{Preliminaries}
\subsection{Frustration-free Hamiltonians and area laws}
\label{sec:prelimsFrustrationFree}

We consider \emph{spin-$1/2$ Hamiltonians on a lattice}. The lattice is described by some graph,
the vertex set of which we denote by $V$. Naturally, the Hamiltonian will be local, or more specifically
include only nearest-neighbor interaction terms. We represent the Hamiltonian as
		\begin{equation}
		 		H \;=\! \sum_{\ens{a,b} \subset V}\! h_{a,b}
		\end{equation}
for some terms $h_{a,b}$ acting on pairs of spins in the lattice described by $V$.
By rescaling, we may without loss of generality require that the ground state energy of each interaction term $h_{a,b}$ is zero.
We wish to describe properties of the ground state manifold $M$ of such Hamiltonians, given the list of the individual two-spin terms $h_{a,b}$ as input.
An important class of Hamiltonians are those which are \emph{frustration-free} (or \emph{unfrustrated}), for which each ground state vector $\ket{\Phi} \in M$ is also a ground state of the individual coupling terms: i{.}e{.} for which
\begin{align}
			h_{a,b}\ket{\Phi} =0
\end{align}
holds for all  $h_{a,b}$ and all $|\Phi\rangle \in M$. 
The actual ground state $\rho$ is the maximally mixed state over $M$, and so is mixed unless the ground state manifold $M$ is non-degenerate.

Our main results pertain to frustration-free spin Hamiltonians as above, with the further constraint that each term $h_{a,b}$ has at least one entangled excited state.
In \sectionref{sec:areaLaws}, we show that the ground states $\rho$ of such Hamiltonians satisfy an area law \cite{Review,Wilczek,Bombelli,Srednicki,Harmonic,Latorre,Area1,Area2,Jin,Cardy,Fermi1,Fermi2,Fermi3,Hastings,Masanes}: that is, for a contiguous region of spins $A \subset V$, the {\it entanglement of formation}.
	If the ground state is non-degenerate and hence pure, the  entanglement of formation is nothing but the usual entanglement entropy.
of the ground state satisfies
\begin{align}
			E_F(\rho) \le C |\partial A|
\end{align} 
for $C > 0$ constant, where $|\partial A|$ is the ``boundary area'' (i.{}e.{} the number of edges in the interaction graph of $H$, which are incident to both $A$ and $V \setminus A$): see \autoref{Area}.
The {\it entanglement of formation} is the largest asymptotically continuous entanglement monotone, so this also implies an entanglement area law for e.{}g.{} the {\it distillable entanglement}.
Thus, ground states of frustration-free Hamiltonians contain little entanglement.
\begin{figure}[hbt]
\includegraphics[width=.5\columnwidth]{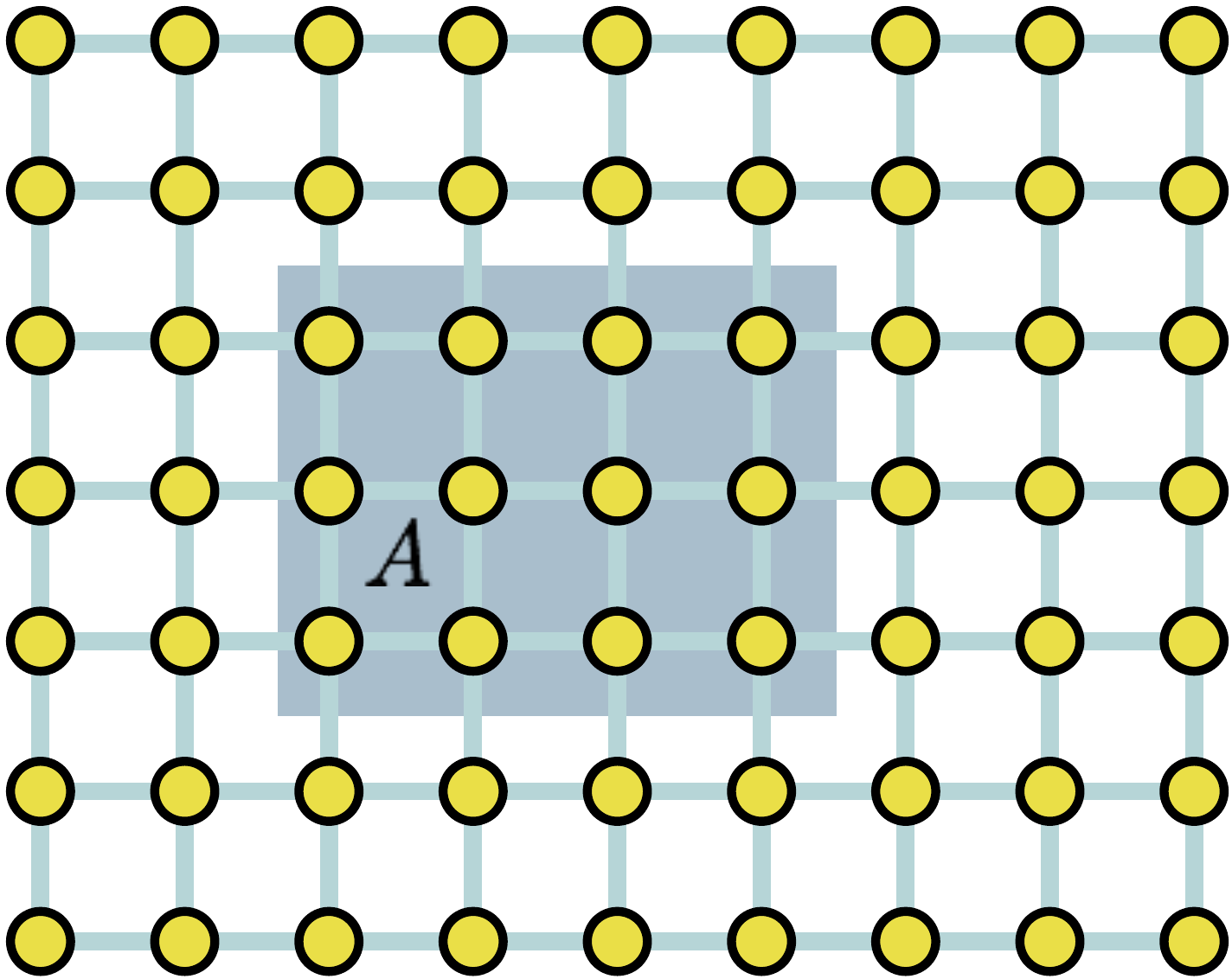}
\caption{The entanglement scales at most as the boundary area $|\partial A| $
of a distinguished region $A$ in some lattice, here a cubic lattice, with tighter bounds for special
cases.\label{Area}}
\end{figure}

This follows as a consequence of the fact that the ground-state manifold is the image of the symmetric subspace on $n$ spins (for some $n$ bounded above by the number of spins of the Hamiltonian) under an 
efficiently simulatable, and explicitly constructible, tree-tensor network.
We show this in \sectionref{sec:sampling}, in which we demonstrate how this allows us to efficiently simulate the ground space of the spin models we consider. For the cases where we have a tree tensor network with a single top root, the problem
considered here may be viewed as exactly the converse problem to the one discussed in Ref.\ \cite{Giov}.

Our result of an analytical area law complements results on area laws in harmonic bosonic systems \cite{Harmonic,Area1,Area2}, 	fermionic \cite{Fermi1,Fermi2,Fermi3} on cubic lattice and general gapped models in one-dimensional quantum chains~\cite{Hastings}.
For a comprehensive review on area laws --- and on implications on the simulatability of quantum 
many-body systems --- see Ref.~\cite{Review}.

\subsection{Quantum \textsc{2-sat} problem}

The arguments behind our analysis builds upon and extends the ideas of Ref.~\cite{Bravyi06}, which defined the problem of quantum satisfiability, and presented Bravyi's algorithm for \qm2sat.
We describe here the connection between this problem and frustration in spin Hamiltonians.

\qm2sat\ is the quantum analogue of the ``classical'' 2-\textsc{sat} problem on boolean formulae.
The latter asks when there exists an assignment of boolean variables $x_1, \ldots, x_n$ which simultaneously ``satisfy'' 
a collection of constraints on pairs of those variables.
This problem is efficiently solvable~\cite{APT79}; in contrast, the 
similar problem 3-\textsc{sat} (in which constraints apply to triples of variables) is \NP-complete~\cite{Karp72}.
In \qm2sat, individual clauses on boolean variables are replaced by projectors $p_{a,b}$ with
\begin{align}
	p^2_{a,b} = p_{a,b}
\end{align}
on pairs of spins: an instance of \qm2sat\ is \emph{satisfiable} if there is a vector which is a zero eigenvector of each projector simultaneously.

For an instance of \qm2sat, determining whether there exist such simultaneous zero eigenvectors is equivalent to determining whether the Hamiltonian obtained by summing the projectors is unfrustrated.
Conversely, the problem of determining when a $2$-local spin Hamiltonian $H$ is frustration-free may be reduced to \qm2sat, by rescaling the terms of the Hamiltonian $H$ so that each term $h_{a,b}$ has a minimal eigenvalue of $0$, and replacing each rescaled term $h_{a,b}$ with the projector $p_{a,b}$ onto $\img(h_{a,b})$. 
By construction, such a substituation does not affect the ground space of the terms.
Thus, solving \qm2sat\ is equivalent to determining whether a $2$-local spin Hamiltonian is frustration-free.

Recently, random instances of $\qm ksat$ with rank-$1$ projectors \cite{Footnote}
have been studied for $k \ge 2$, delineating the ``boundary'' of 
frustration in $k$-local spin Hamiltonians in terms of the density of 
interactions~\cite{kQSat,RandomQSat,BravyiQSat,Zittartz}.
We instead extend the findings of Ref.~\cite{Bravyi06} for $k = 2$, remarking on implications for simulating the ground space manifold in frustration-free Hamiltonians.
In \sectionref{sec:reviewBravyi}, we review Bravyi's algorithm for \qm2sat, in order to demonstrate important features of the reductions involved when they are applied to unfrustrated Hamiltonians satisfying natural constraints. 

\section{Reduction tools for frustration-free Hamiltonians}
\label{sec:reviewBravyi}

Bravyi's algorithm for \qm2sat~\cite{Bravyi06} efficiently demonstrates the satisfiability of an instance of \qm2sat\ by a sequence of reductions of Hamiltonians, yielding a \emph{homogeneous} instance (in which all projectors have rank $1$), and then verifying the satisfiability of these instances.
We may similarly use Bravyi's algorithm to detect frustration in $2$-local spin Hamiltonians, and consider the features of these Hamiltonian reductions when applied to particular classes of frustration-free Hamiltonians.

Throughout the following, we admit representations of Hamiltonians $H'$ with non-zero single spin terms $h_a$,
\begin{equation}
		H'
	\;=\!	\sum_{\ens{a,b} \subset V} \!\! h_{a,b} \;\;+\;\; \sum_{a \in V} h_a	\;,
\end{equation}
and again describe $H'$ as unfrustrated if there exists a joint ground state with eigenvalue zero of all terms 
(including the single-spin terms $h_a$).

\subsection{Reductions by isometries}
\label{sec:reductionsByIsometries}

Condensing the analysis of Ref.~\cite{Bravyi06}, we consider a reduction for $2$-local Hamiltonians $H$ to Hamiltonians $H'$ on fewer spins, provided that $H$ contains only positive semidefinite terms which have non-trivial kernels. Throughout, we denote $\C^2$ by $\cH$.

\subsubsection{Two-spin isometric contractions}
\label{sec:twoQubitIsometries}

Consider a Hamiltonian term $h_{u,v }$ of rank $2$ or $3$. 
If $H$ is frustration-free, $h_{u,v}$ fixes a subspace of $\cH\sox{\ens{u,v}}$ of dimension at most $2$, over which the reduced state $\rho_{u,v}$ of a state vector $\ket{\Phi} \in \ker(H)$ must be a mixture.
We describe this reduced state by an encoding of one spin into two.
Let $\ens{\ket{\psi_0}, \ket{\psi_1}, \ket{\psi_2}, \ket{\psi_3}}$ be an orthonormal basis for $\cH \ox \cH$ such that
\begin{equation}
		\ker(h_{u,v}) \subset \Span \ens{\ket{\psi_0}, \ket{\psi_1}}.
\end{equation}	
Define an isometry $U_{u:u,v}: \cH\sox{\ens{u}} \!\to\! \cH\sox{\ens{u,v}}$ such that
\begin{align}
	\label{eqn:twoQubitIsometry}
 		\sum_{x \in \ens{0,1}} \!\! \alpha_x \ket{x}_u
	\quad\mapsto&\;\;
		\sum_{x \in \ens{0,1}} \!\! \alpha_x \ket{\psi_x}_{u,v} 	\;.
\end{align}
This is an \emph{isometric reduction}, similar to those in a \emph{tree tensor network} or a 
\emph{MERA ansatz}~\cite{MERA}.
By construction, the support of the reduced state $\rho_{u,v}$ lies in $\img(U_{u:uv})$. 
We may then define a Hamiltonian 
\begin{align}
		\label{eqn:twoQubitIsometryInducedHamiltonian}
		H'  =  U_{u:u,v}\herm  H  U_{u:u,v} 
\end{align}
on the subsystem $V' = V \setminus \ens{v}$, where the spin $v$ is essentially deleted;  any state vector $\ket{\Phi} \in \ker(H)$ then has the form 
\begin{align}
		\ket{\Phi} = U_{u:uv} \ket{\Phi'}, \qquad\ket{\Phi'} \in \ker(H').
\end{align}
We may express $H'$ as a sum of terms 
\begin{equation}
	h'_{a,b} = U_{u:u,v}\herm h_{a,b}  U_{u:u,v},\quad
	h'_a = U\herm_{u:u,v}h'_aU_{u:u,v}.
\end{equation}	
(In the case that $h$ is of rank $3$, these will include a non-zero single spin operator $h'_{u,v}$ which acts on $u$ alone.)
If $H$ contains non-zero terms $h_{a,u}$ and $h_{a,v}$, we obtain two terms $h'_{a,u} = U\herm_{u:u,v}h_{a,u}U_{u:u,v}$ and $h'_{a,v} = U\herm_{u:u,v}  h_{a,v}  U_{u:u,v}$ in the Hamiltonian $H$, which both act on the spins $u$ and $a$.
We sum these to obtain a combined term
\begin{equation}
		\bar h'_{a,u} = h'_{a,u} + h'_{a,v} 
\end{equation}
in the reduced Hamiltonian, which may be of higher rank than either $h'_{a,u}$ or $h'_{a,v}$.
(We similarly accumulate any single-spin contributions $h'_u$ and $h'_v$ which may arise from single-spin terms $h_u$ and $h_v$ in $H$.) 
Fig.\ \ref{fig:graphReduction} illustrates the
effect of multiple reductions on the interaction graph of the
Hamiltonian.

\begin{figure}
	\begin{center}
	\includegraphics[width=.42\columnwidth]{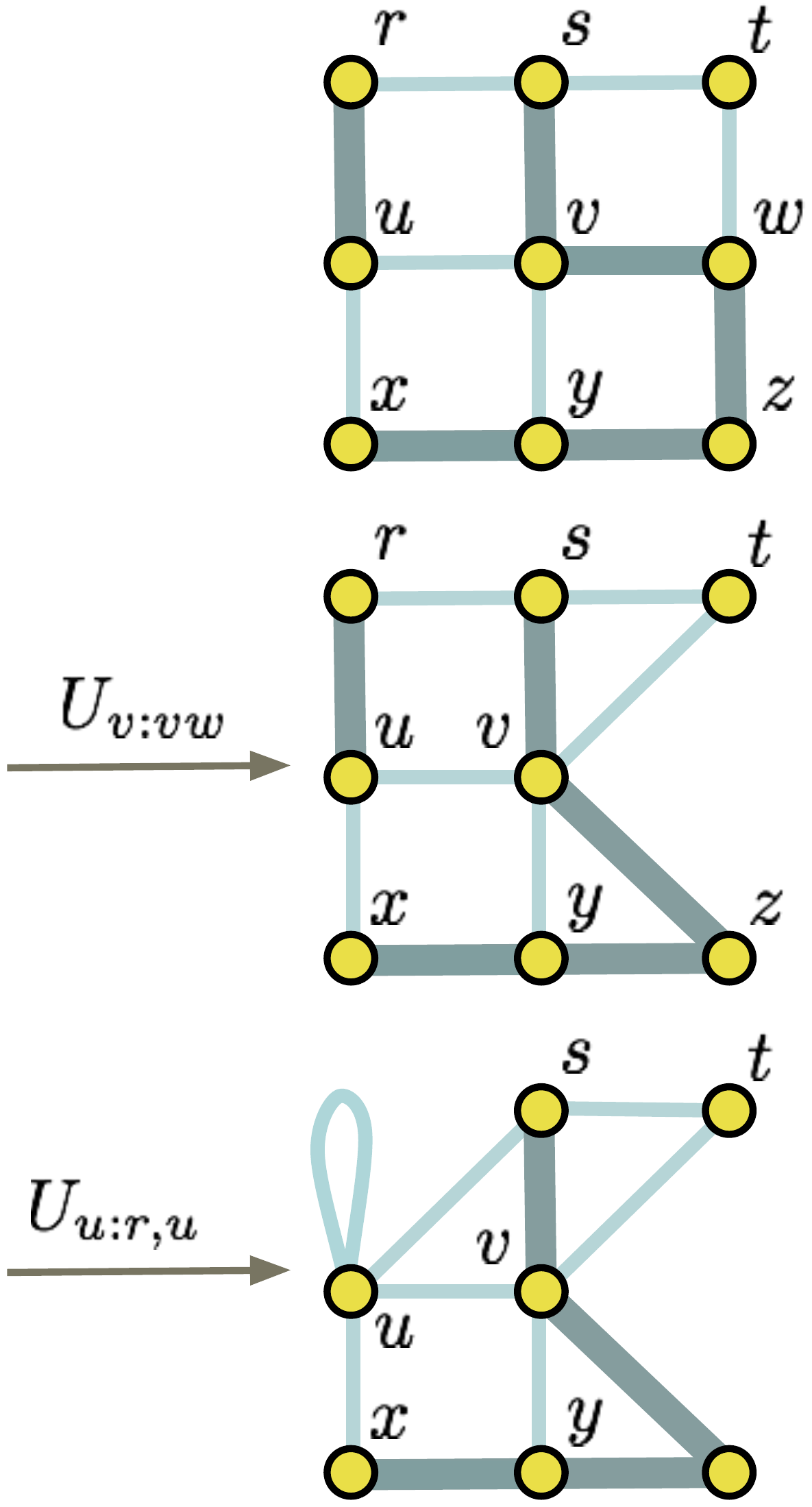}
	\caption{
		Illustration of transformations of the interaction graph of a Hamiltonian $H$ by two-spin isometries.
		Darker, thicker edges represent interaction terms $h_{a,b}$ of rank $2$ or $3$, which may be eliminated by contraction (corresponding to a reduction $H \mapsto* U_{a:ab}\herm H U_{a:ab}$).
		In the case of an interaction term with $\rank(h_{r,u}) = 3$ as illustrated above, a single spin operator (represented above by a loop) is produced on the contracted vertex.}\label{fig:graphReduction}
	\end{center}
\end{figure}

For example: Consider an anti-ferromagnetic four-spin Hamiltonian, with interactions 
\begin{equation}
	h_{j,k} \;=\; \tfrac{1}{2}\sigma_x^{(j)} \sigma_x^{(k)} \;+\; \tfrac{1}{2}\sigma_y^{(j)} \sigma_y^{(k)} 
\end{equation}	
on a four-spin cycle with interacting pairs $(1,2)$, $(2,3)$, $(3,4)$, and $(4,1)$.
Here we denote
\begin{subequations}
\begin{align}
	\sigma_x \;=&\; \left[\;\begin{matrix} 0 & 1 \\ 1 & 0 \end{matrix}\;\right] \;=\; \ket{1}\bra{0} + \ket{0}\bra{1}	,		\\
	\sigma_y \;=&\; \left[\;\begin{matrix} 0 & \!-i \\ i & 0 \end{matrix}\;\right] \;=\; i\ket{1}\bra{0} - i\ket{0}\bra{1}	.
\end{align}
\end{subequations}
Rescaling the interactions to have ground energy zero (and taking these for the $h_{j,k}$ instead) gives us
\begin{align}
		h_{j,k}	\;=\;	\ket{0,0}\bra{0,0}_{j,k} + \ket{1,1}\bra{1,1}_{j,k}	\;.
\end{align}
The kernel of this operator is clearly spanned by $\ket{0,1}$ and $\ket{1,0}$.
We may consider the effect of contracting spins $3$ and $4$ into a renormalized spin, using the isometry
\begin{align}
	R_{3:3,4} = \big(\ket{0}_3 \ox \ket{1}_4\big) \bra{0}_3 + \big(\ket{1}_3 \ox \ket{0}_4\big) \bra{1}_3 \;.
\end{align}
By construction, we have $R_{3:3,4}\herm h_{3,4} R_{3:3,4} = 0$; and as $\ens{1,2}$ is disjoint from $\ens{3,4}$, we have $h'_{1,2} = R_{3:3,4}\herm h_{1,2} R_{3:3,4} = h_{1,2}$.
We compute the renormalized terms $h'_{2,3}$ and $h'_{4,1}$ as
\begin{subequations}
\begin{align}
		h'_{2,3}
	\;=&\;
			R_{3:3,4}\herm \Big( \ket{0,0}\bra{0,0}_{2,3} \ox \idop_4 + \ket{1,1}\bra{1,1}_{2,3} \ox \idop_4 \Big) R_{3:3,4}
	\notag\\=&\;
			\ket{0,0}\bra{0,0}_{2,3} + \ket{1,1}\bra{1,1}_{2,3}	\;,
	\\[1ex]
		h'_{4,1}
	\;=&\;
			R_{3:3,4}\herm \Big( \idop_3 \ox \ket{0,0}\bra{0,0}_{4,1}  +  \idop_3 \ox \ket{1,1}\bra{1,1}_{4,1} \Big) R_{3:3,4}
	\notag\\=&\;
			\ket{1,0}\bra{1,0}_{3,1} + \ket{0,1}\bra{0,1}_{3,1}	\;;
\end{align}
\end{subequations}
up to rescaling, the resulting renormalized Hamiltonian is then
\begin{align}
		H'
	=
		R_{3:3,4}\herm H R_{3:3,4}
	\sim&\;
				\tfrac{1}{2}\Big(	\big[ \sigma_x^{(1)} \sigma_x^{(2)} + \sigma_y^{(1)} \sigma_y^{(2)}	 \big]	\notag\\
				&\;\;+\;		
				\big[ \sigma_x^{(2)} \sigma_x^{(3)} + \sigma_y^{(2)} \sigma_y^{(3)}	\big]		\notag\\
				&\;\;-\;
				\big[ \sigma_x^{(3)} \sigma_x^{(1)} + \sigma_y^{(3)} \sigma_y^{(1)}	\big]\Big)\;,
\end{align}
with a ferromagnetic coupling between the site $1$ and the renormalized site $3$.
Arbitrary rank-2 or rank-3 interactions may be contracted similarly.
 
\subsubsection{Single-spin deletions}
\label{sec:oneQubitRemoval}

	For a $2$-local Hamiltonian $H$ containing non-zero single-spin operators $h_v$ (e.g., such as may arise from the preceding reduction),
	any state vector $\ket{\Phi} \in \ker(H)$ must be factorizable into a single-spin pure state vector 
	$\ket{\psi} \in \ker(h_v)$ acting on $v$, and some state of the remaining spins.
	If the operator $h_v$ has full rank, it follows $\ket{\psi} = \vec 0$, so that $H$ has trivial kernel.
	Otherwise, $\ket{\psi}$ may be taken to be a unit vector spanning the kernel of $h_v$, and we may form a Hamiltonian
	\begin{align}
		\label{eqn:singleQubitRemoval}
		 H' = \bigl( \bra{\psi}_v \ox  \id \bigr)  H  \bigl( \ket{\psi}_v \ox  \id \bigr)
	\end{align}
	on the subsystem $V' = V \setminus \ens{v}$, consisting of a sum of terms $h'_a = h_a$ acting on individual spins $a \ne v$, and terms
	\begin{gather}
	 		h'_{a,b}
		 = 
			\bigl( \bra{\psi}_{v} \ox  \id \bigr)  h_{a,b}  \bigl( \ket{\psi}_{v} \ox  \id \bigr),
	\end{gather}
	acting on pairs of spins $\ens{a,b} \subset V$.
	In the latter case, if $v \in \ens{a,b}$, then $h'_{a,b}$ will be an operator acting on a single spin; otherwise, we have $h'_{a,b} = h_{a,b}$.
	Again, in the case of single-spin terms $h'_{a,b}$ acting on a spin $a$, if there was a term $h_a$ present in $H$, we may accumulate these into a term $\bar h'_a = h_a + h'_{a,b}$.

We may again describe $H'$ using an isometric reduction in this case: if we define $P_v = \ket{\psi_v} \ox \id_{V'}$, then $P_v$ is an isometry whose image contains any state vector $\ket{\Phi} \in \ker(H)$.
We may then rewrite	 \eqref{eqn:singleQubitRemoval} as
\begin{align}
	H' = P_v\herm H P_v.
\end{align}
expressing it in a form more explicitly similar to \eqref{eqn:twoQubitIsometryInducedHamiltonian}.
	
\subsubsection{Remarks on these reductions}

The reductions above correspond to the reductions presented in Ref.~\cite{Bravyi06} for instances of \qm2sat\ which contain two-spin operators of 
rank greater than $1$.
This allows us to reduce \qm2sat\ to the special case of ``homogeneous'' instances (in which all terms have rank $1$).

The key feature of both reductions above is that the kernels of $H$ and $H'$ are related by isometries, and so have the same dimension.
If the Hamiltonian $H'$ has any terms of full rank (acting on either one or two spins), it follows that the Hamiltonian $H'$ has trivial kernel; then the same holds for $H$ as well.
If we do not encounter any full-rank terms, each reduc- tion produces a Hamiltonian acting on one fewer spins, even- tually yielding a ``homogeneous'' Hamiltonian (extending the terminology of Ref.~\cite{Bravyi06} to Hamiltonians in general, including those with single-spin terms of rank 1).

The choice of the reduction at each stage does not matter, in the following sense.
As long as we have a Hamiltonian containing two-spin terms of rank at least $2$, and which does not contain full-rank terms, we may extend any sequence of reductions to one which terminates with a Hamiltonian $\tilde H$ which is either homogeneous or contains a full-rank term.
In the latter case, the original Hamiltonian has trivial kernel, and is therefore frustrated; otherwise, we obtain a ``homogeneous'' Hamiltonian whose kernel may be mapped to that of the original Hamiltonain $H$ by a sequence of known isometries.
If we can solve the homogeneous case, we may then choose the reductions according to whichever criteria are convenient.

\label{remark:isometriesTreeTensorNetwork}
Note that this reduction process, from an input Hamiltonian $H$ to a homogeneous Hamiltonian, amounts to a \emph{tree tensor network} of isometries (albeit applied to a vector subspace): from a temporal top layer defined by a Hamiltonian containing only terms of rank $1$, one constructs the ground space of the full Hamiltonian $H$ by sequential applications of isometries with a simple topology.
We develop this observation further, and remark on implications for simulating the ground space of $H$, in \sectionref{sec:sampling}.
Note also that for a one-dimensional quantum chain and a sequential contraction, 
this construction gives rise to a {\it sequential preparation} of
a quantum state and hence to a {\it matrix-product state} of small bond dimension.

\subsection{The homogeneous case}
\label{sec:homogeneousCase}

Given a homogeneous Hamiltonian $H'$ (containing only terms of rank 1) acting on some system $V'$, consider a collection of 
vectors $\ket{\beta_{a,b}} \in \cH\sox{\ens{a,b}}$ such that
\begin{equation}
	h_{a,b} = \ket{\beta_{a,b}}\bra{\beta_{a,b}} \;.
\end{equation}
We interpret each two-spin Hamiltonian term $h'_{a,b}$ as a constraint on the corresponding two-spin marginal state $\rho_{a,b}$ of a state $\ket{\Phi} \in \ker(H)$, and attempt to obtain additional constraints on pairs of spins $a,b \in V'$ by combinations of the constraints which are already known.
Ref.~\cite{Bravyi06} shows that if $\ket{\Phi}$ lies in the kernel of $\bra{\beta_{a,b}}$ and $\bra{\beta_{b,c}}$ acting on the corresponding spins, it also lies in the kernel of the functional
\begin{gather}
	\label{eqn:constraintInduction}
		\bra{\beta'_{a,c}}
	 = 
		\bigl( \bra{\beta_{a,b}} \ox \bra{\beta_{b,c}} \bigr) \bigl(  \id \ox \ket{\Psi^-} \ox  \id \bigr)
\end{gather}
acting on the spins $a$ and $c$, where $\ket{\Psi^-} \propto \ket{0}\ket{1} - \ket{1}\ket{0}$ is the two-spin antisymmetric state vector.
We call such a constraint $\bra{\beta'_{a,c}}$ an ``induced'' constraint, and use the term \emph{induction of constraints} to refer to the operation on $\bra{\beta_{a,b}}$ and $\bra{\beta_{b,c}}$ which gives rise to $\bra{\beta'_{a,c}}$ in \eqref{eqn:constraintInduction}, up to a scalar factor.

For each induced constraint $\bra{\beta'_{u,v}}$ on a pair of spins $u,v$, we may add a term 
\begin{equation}
	\tilde h_{u,v} = \ket{\beta'_{u,v}}\bra{\beta'_{u,v}}
\end{equation}	
to the Hamiltonian $H'$, obtaining a Hamiltonian $\tilde H$ which (by construction) has the same kernel as $H'$.
If $H'$ already contains a term $h_{u,v}$ which is not colinear to the induced term $\tilde h_{u,v}$, these may be accumulated into a term $\bar h_{u,v}$ whose rank is at least $2$, and one may apply a two-spin contraction as described in \sectionref{sec:reductionsByIsometries}.
Otherwise, we may induce further constraints from the terms of $\tilde H$, until we obtain a \emph{complete} homogeneous Hamiltonian $H_c$: a Hamiltonian in which the two-spin constraints $\bra{\beta_{u,v}}$ are closed under constraint-induction.

By inducing constraints on pairs of spins, possibly performing two-spin contractions as in \sectionref{sec:reductionsByIsometries} when 
we obtain terms of rank $2$ or more, we may efficiently obtain a complete homogeneous Hamiltonian $H_c$ from a frustration-free, $2$-local Hamiltonian $H$.
Furthermore, Ref.~\cite{Bravyi06} shows that a complete homogeneous Hamiltonian $H_c$ acting on at least one spin 
(and which lacks single-spin operators \cite{Footnote2})
has a ground space which contains product states.
Thus, for homogeneous Hamiltonians $H$, we may either efficiently determine that it is frustrated, or efficiently obtain a Hamiltonian which is closed under constraint-induction.
In the latter case, we may construct product states in the kernel of $H$ by selecting states for each spin consistent with the two-spin constraints \cite{Footnote3}.

\section{Unfrustrated Natural Hamiltonians}
\label{sec:naturalHamiltonians}

We now present results concerning the ground state manifold of a ``physical'' class of $2$-local spin Hamiltonians.
We will say that a Hamiltonian $H$ is \emph{natural} if it is $2$-local, contains no isolated subsystems, 
and each term $h_{a,b}$ (acting on $\cH \ox \cH$) has at least one entangled excited state (i.{}e.{},~there exists an entangled state orthogonal to the ground state manifold of $h_{a,b}$).
Without loss of generality, we may further require that the ground energy of each term in $H$ is zero.
This is a natural assumption that typical physical interactions will satisfy: for instance, ferromagnetic or anti-ferromagnetic Ising interactions (which have excited eigenstates $\ket{0}\ket{1} - \ket{1}\ket{0}$ and $\ket{0}\ket{0} + \ket{1}\ket{1}$ respectively), ferromagnetic or anti-ferromagnetic XXX models (which also have those respective eigenstates), or indeed any interaction which is inequivalent to either $\diag(0,0,E_a,E_b)$ or $\diag(0,E_a,0,E_b)$ up to rescaling and a choice of basis for each spin.

Using the reductions of \sectionref{sec:reductionsByIsometries}, we show strong bounds on the dimension of the ground space of an unfrustrated natural Hamiltonian on spins.
This will allow us in \sectionref{sec:sampling} to describe a scheme for efficiently simulating the ground space of frustration-free natural Hamiltonians, and in \sectionref{sec:areaLaws} to demonstrate that the ground states of such Hamiltonians satisfy an entanglement area law.

\subsection{Ground-spaces of unfrustrated, natural, homogeneous Hamiltonians}
\label{sec:entangledCompleteHomogeneous}

We now present an extension of the analysis of Ref.~\cite{Bravyi06} for homogeneous and complete Hamiltonians $H_c$ (acting on a set $V_c$ of spins), to examine the ground-state manifold of $H_c$ in the case that $H_c$ is also natural.
We show, using techniques similar to those used in Ref.\ \cite[Section III A]{RandomQSat}, that the ground space of such a Hamiltonian is equivalent to the symmetric subspace $\Symm(\cH\sox{V_c}) \subset\cH\sox{V_c}$, up to some efficiently constructible choice of invertible operations on each spin.
As we may reduce more general Hamiltonians, i{.}e{.} having terms of rank $2$ or $3$ (extending beyond those Hamiltonians considered in Ref.\ \cite{RandomQSat}) to homogeneous natural Hamiltonians via the reductions of \sectionref{sec:reductionsByIsometries}, these results yield important consequences for natural frustration-free Hamiltonians in general.

Consider a Hamiltonian $H_c$ acting on $V_c$, where $H_c$ has no single-spin terms.
Because the two-spin constraints de- scribed by the terms of $H_c$ are closed under the induction of constraints (as described  by \eqref{eqn:constraintInduction}), and as there are no isolated subsystems, it is easy to show that every pair of spins is acted on by a non-zero term in $H_c$.
For such a Hamiltonian, the excited states $\ket{\beta_{a,b}}$ for the terms $h_{a,b}$ in $H_c$ are entangled states.
We may then construct a family of operators $\ens{L_v}_{v \in V_c} \subset \GL(2)$ such that
\begin{align}
	\label{eqn:localMappingToAsymmetricState}
		\bra{\beta_{u,v}}
	\;\propto\;
		\bra{\Psi^-}_{u,v} \bigl( L_u \ox L_v \bigr) 
\end{align}	
for each pair of spins $u,v$, where $\ket{\Psi^-}$ is again the two-spin antisymmetric state vector.
For instance, one may fix $L_a =  \id$ for an arbitrarily chosen spin $a \in V_c$, and determine linear operators $L_v$ satisfying \eqref{eqn:localMappingToAsymmetricState} for each $v \in V_c$ and operator $\bra{\beta_{a,v}}$.
Any such choice of operators $\ens{L_v}_{v \in V_c}$ satisfies \eqref{eqn:localMappingToAsymmetricState} for all $u,v$, which follows from the closure of the constraints $\bra{\beta_{u,v}}$ under induction:
\begin{align}
		\bra{\beta_{u,v}}
	\;\propto&\;\;
		\bigl( \bra{\beta_{ua}} \ox \bra{\beta_{a,v}} \bigr) \bigl(  \id \ox \ket{\Psi^-} \ox  \id \bigr)
	\notag\\[0.5ex]\propto&\;\;
		\bigl( \bra{\Psi^-} \ox \bra{\Psi^-} \bigr) \bigl( L_u \ox \ket{\Psi^-} \ox L_v \bigr)
	\notag\\[0.5ex]\propto&\;\;
		\bra{\Psi^-} \bigl( L_u \ox L_v \bigr).
\end{align}
We define scalars $\lambda_{u,v} \ne 0$ such that 
$\bra{\beta_{u,v}} = \lambda_{u,v} \bra{\Psi^-} \bigl(L_u \ox L_v\bigr)$ for each $u,v \in V_c$, 
and let $T = ( \bigotimes_{v \in V_c} L_v )\inv$: we then have
\begin{subequations}
\begin{align}
 	\mspace{-5mu}
 		\bigl( \bra{\beta_{u,v}} \ox  \id \bigr) T
	\;=&  
		\bra{\Psi^-}_{u,v} \ox
	T_{\overline{u,v}}	\;,
\end{align}
where we define $T_{\overline{u,v}}$ by
\begin{align}
		T_{\overline{u,v}}
	\;\;=&\;\;
		\lambda_{u,v} \! \bigotimes_{w\ne u,v} L_w^{-1} .
\end{align}
\end{subequations}
As the operators $T_{\overline{u,v}}$ have full rank, the operator $\big( \bra{\beta_{u,v}} \ox  \id \big) T$ then has the same kernel as $\bra{\Psi^-}_{u,v}$ , which is the $+1$-eigenspace of the $\textsc{swap}$ operator acting on $u$ and $v$.
It follows that the kernel of the Hamiltonian
\begin{align}
 		T\herm H_c T
	\;=&\;
		\sum_{u,v \in V_c} T\herm \bigl( \ket{\beta_{u,v}}\bra{\beta_{u,v}} \ox  \id \bigr) T
	\notag\\=&\;
		\sum_{u,v \in V_c} \ket{\Psi^-}\bra{\Psi^-}_{u,v}  \ox  T_{\overline{u,v}}\herm \:\! T_{\overline{u,v}}^{\phantom\dagger}
\end{align}
is the symmetric subspace $\Symm(\cH\sox{V_c})$; this corresponds to the result of \cite[Section III A]{RandomQSat}.

We remark on some important properties of $\Symm(\cH\sox{n_c})$, where $n_c = |V_c|$.
This subspace is spanned by uniform superpositions $\ket{W_k}$ of the standard basis states having Hamming weight $0 \le k \le n_c$,
\begin{gather}
		\ket{W_k}
	\;\;\;\propto\!
		\sum_{\substack{x \in \ens{0,1}^{n_c} \\ \!\!\! \|x\|_1  =  k}} \ket{x}	\;;
\end{gather}
thus $\dim \Symm(\cH\sox{n_c}) = n_c+1$.
This subspace may also be spanned by product state vectors 
$\ket{\alpha_0}\sox{n_c}\!,  \ldots,  \ket{\alpha_{n_c}}\sox{n_c}$ for any set of $n_c+1$ 
pairwise independent state vectors $\ket{\alpha_j} \in \cH$.
Thus, any natural Hamiltonian $H_c$ which is also complete and homogenous has a ground space of dimension $n_c+1$, and can be spanned by a family of classically efficiently simulatable state vectors
\begin{align}
		\ket{\Phi_j}
	\;=&\;
		\bigotimes_{v \in V_c} \bigl( L_v \ket{\alpha_j} \bigr),
\end{align}
for some choice of pair-wise independent single-spin state vectors $\ket{\alpha_0}, \ldots, \ket{\alpha_{n_c}} \in \cH$; we use this fact in \sectionref{sec:sampling}. Note that if even this efficient method should be too computationally costly for very large systems,
one can also Monte-Carlo sample from the ground state manifold in this way.

\subsection{Preservation of natural Hamiltonians under reductions}
\label{sec:reductionsPreserveNaturality}

A key feature of natural Hamiltonians (defined on \autopageref{sec:naturalHamiltonians}) is that the class of \emph{frustration-free} natural Hamiltonians on spins is preserved by the two-spin contractions described in \eqref{eqn:twoQubitIsometryInducedHamiltonian}.
This implies that the reductions of \sectionref{sec:reductionsByIsometries} map the ground-state manifold of an unfrustrated natural Hamiltonian $H$ provided as input to that of a complete, homogeneous, natural Hamiltonian; we may then apply the results of the preceding section to describe the ground-state manifold of $H$.

Consider an isometry $U_{u:u,v} : \cH\sox{\ens{u}} \to \cH\sox{\ens{u,v}}$ derived from a two-spin Hamiltonian term $h_{u,v}$ as described in \sectionref{sec:twoQubitIsometries}.
We may show that for any term $h_{a,u}$ in $H$, the corresponding term $h'_{a,u} = U_{u:u,v}\herm h_{a,u} U_{u:u,v}$ in the reduced Hamiltonian $H' = U_{u:u,v}\herm H U_{u:u,v}$ has an entangled excited state if the same holds for $h_{a,u}$.
We require the following two lemmas, whose proofs we defer to \hyperref[apx:techincalLemmas]{Appendix~\ref*{apx:techincalLemmas}}:

	\begin{lemma}[Product states]
		\label{lemma:nullOperatorProducts}
		For two-spin state vectors $\ket{\psi}$ and $\ket{\phi}$, we have 
		$\big( \bra{\psi} \ox  \id \big) \big(  \id \ox \ket{\phi} \big) = 0$ only if both $\ket{\psi}$ and $\ket{\phi}$ are product states.
	\end{lemma}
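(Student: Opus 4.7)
The plan is to identify each two-spin vector with its $2\times 2$ coefficient matrix in the computational basis and reduce the hypothesis to a matrix equation. Writing $\ket{\psi} = \sum_{ij} \Psi_{ij}\ket{i}\ket{j}$ and $\ket{\phi} = \sum_{kl} \Phi_{kl}\ket{k}\ket{l}$, I would first compute the composite $M = (\bra{\psi}\otimes\id)(\id\otimes\ket{\phi})$ on basis states of the outer tensor factor and express it as a matrix product in $\Psi$ and $\Phi$. A short direct calculation yields essentially $M \propto \Phi^{T}\Psi^{\dagger}$ (up to index conventions), so the hypothesis $M=0$ becomes the matrix identity $\Psi^{*}\Phi = 0$.

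Next, I would invoke the standard fact that a two-spin state vector is a product state iff its coefficient matrix has rank at most one. Assuming $\ket{\psi}$ and $\ket{\phi}$ are both nonzero (the zero vector being a trivial product state), $\Psi$ and $\Phi$ each have rank at least $1$. If $\Psi$ had rank $2$, then $\Psi^{*}$ would be invertible and $\Psi^{*}\Phi=0$ would force $\Phi=0$, contradicting $\ket{\phi}\ne 0$; symmetrically, if $\Phi$ had rank $2$, its columns would span $\cc^{2}$, and $\Psi^{*}\Phi=0$ would give $\Psi^{*}=0$, again a contradiction. Hence both matrices have rank exactly one, so both $\ket{\psi}$ and $\ket{\phi}$ are product states.

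The only real obstacle is bookkeeping of indices and complex conjugates when casting the composite map into matrix form; everything after that is elementary linear algebra on $2\times 2$ matrices. A basis-free variant that avoids the index gymnastics would instead write each state in Schmidt form and observe that the rank of $M$ is controlled directly by the Schmidt ranks of $\ket{\psi}$ and $\ket{\phi}$, so $M=0$ forces both Schmidt ranks to be at most one and yields the same conclusion. I would select whichever presentation reads most cleanly in the final write-up.
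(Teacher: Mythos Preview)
Your proposal is correct. The matrix-coefficient argument is sound: with $M_{li}=\sum_k \overline{\Psi_{ik}}\Phi_{kl}$ one indeed gets $M=\Phi^{T}\Psi^{\dagger}$, so $M=0$ is equivalent to $\Psi^{*}\Phi=0$, and the rank dichotomy on $2\times 2$ matrices finishes it.

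The paper takes the Schmidt-decomposition route you mention as your ``basis-free variant'': it writes $\ket{\psi}=\sum_r \mu_r\ket{e_r}\ket{f_r}$ and $\ket{\phi}=\sum_s \nu_s\ket{g_s}\ket{h_s}$, relabels so that $\langle f_r|g_r\rangle\ne 0$ for $r\in\{0,1\}$, expands $M$ in the basis $\ket{e_r}\bra{h_s}$, and reads off $\mu_0\nu_0=\mu_1\nu_1=0$, forcing $\mu_r=0$ and $\nu_{1-r}=0$ for some $r$. Your matrix approach is arguably cleaner: it avoids the relabeling step (which tacitly uses that a $2\times 2$ unitary always has a nonzero generalized diagonal) and the small case analysis, replacing both by a one-line invertibility argument. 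The Schmidt route, on the other hand, makes the structure of $M$ as a sum of rank-one pieces visible and generalizes more transparently to higher local dimension. Either presentation is fine; your primary argument is the more economical of the two.
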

 
	\begin{lemma}[Product operators]
		\label{lemma:isometricContractionPreservingProducts}
		Let $U: \cH \to \cH \ox \cH$ be an isometry which is not a product operator.
		Let $\eta \ge 0$ be an operator on two spins, and \textup{$\eta' = \big(U\herm \ox \id_2) (\id_2 \ox \eta) (U \ox \id_2)$}.
		If $\eta'$ is not of full rank, then $\eta'$ is a product operator if and only if $\eta$ is a product operator.
	\end{lemma}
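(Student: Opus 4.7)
The plan is to prove the two implications separately. The ``only if'' direction---that $\eta$ being a product implies $\eta'$ is a product---is immediate by direct substitution: if $\eta = \eta_L \ox \eta_R$, with $\eta_L$ acting on the spin contracted by $U$ and $\eta_R$ on the untouched spin, then $\eta' = \bigl( U\herm(\id \ox \eta_L)U \bigr) \ox \eta_R$ is visibly a product of single-spin operators. This direction uses neither hypothesis.

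For the converse, I would begin by exploiting positive semidefiniteness. Since $\eta \ge 0$ we have $\eta' \ge 0$, so any product decomposition $\eta' = A \ox B$ can be arranged to have $A, B \ge 0$. The rank deficiency of $\eta'$ then forces at least one of $A$, $B$ to have a nonzero kernel vector. The key observation is that, since $\id \ox \eta \ge 0$, a vector $\ket{\xi}$ lies in $\ker \eta'$ if and only if $(U \ox \id)\ket{\xi} \in \cH \ox \ker\eta$; this transports kernel information about $\eta'$ through $U$ into kernel information about $\eta$.

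I would then split into two cases. In case~(i), $A$ has kernel vector $\ket{\alpha}$, so $\ket{\alpha} \ox \cH \subset \ker\eta'$. Expanding $U\ket{\alpha} = \ket{0}\ox\ket{v_0} + \ket{1}\ox\ket{v_1}$, the observation above yields $\mathrm{span}\{\ket{v_0}, \ket{v_1}\} \ox \cH \subset \ker\eta$. If $U\ket{\alpha}$ is entangled, this span is all of $\cH$ and $\eta = 0$; otherwise $U\ket{\alpha} = \ket{p}\ox\ket{q}$ is a product and only $\ket{q} \ox \cH \subset \ker\eta$. A standard PSD argument---vectors in the kernel of a positive operator annihilate it from both sides---then confines $\eta$ to the complementary subspace $\ket{q^\perp} \ox \cH$, giving $\eta = \ket{q^\perp}\bra{q^\perp} \ox Q$, a product. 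In case~(ii), $B$ has kernel vector $\ket{\beta}$, so $\cH \ox \ket{\beta} \subset \ker\eta'$. Writing $U\ket{\alpha} = \sum_i \ket{i}\ox M_i\ket{\alpha}$ for single-spin maps $M_0, M_1 \colon \cH \to \cH$, we obtain $(\img M_0 + \img M_1) \ox \ket{\beta} \subset \ker\eta$. The non-product hypothesis on $U$ enters essentially here: if $\img M_0 + \img M_1$ were contained in a single line $\cc\ket{\phi}$, each $M_i$ would have the form $\ket{\phi}\bra{m_i}$ and $U$ would factor as $(\,\cdot\,) \ox \ket{\phi}$, contradicting the assumption. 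Hence $\cH \ox \ket{\beta} \subset \ker\eta$, and the same PSD argument yields $\eta = P \ox \ket{\beta^\perp}\bra{\beta^\perp}$, again a product.

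The main subtlety is the bookkeeping of which factor of $\eta'$ is rank-deficient, and pinning down that the non-product hypothesis on $U$ is needed only in case~(ii), to ensure the images of the two ``slices'' $M_0, M_1$ jointly span $\cH$. \lemmaautorefname~\ref{lemma:nullOperatorProducts} does not appear strictly necessary for the argument as sketched, though it could be invoked to streamline the extraction of product structure from the kernel information above.
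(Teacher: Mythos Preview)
Your argument is correct, and your observation that \autoref{lemma:nullOperatorProducts} is not strictly needed is accurate. One cosmetic point: you have the labels ``if'' and ``only if'' swapped; in the biconditional ``$\eta'$ is a product iff $\eta$ is a product'', the easy direction ($\eta$ product $\Rightarrow$ $\eta'$ product) is the ``if'' direction.

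The paper's own proof follows the same two-case split (according to which tensor factor of $\eta'$ is rank-deficient) but executes it differently. Rather than transporting kernel vectors of $\eta'$ through $U$ to kernel information about $\eta$, the paper first writes $\eta = \sum_k \lambda_k \ket{\phi_k}\bra{\phi_k}$ spectrally, then in each case expands the vanishing quantity $\bigl(\bra{\gamma}\ox\id\bigr)\eta'\bigl(\ket{\gamma}\ox\id\bigr)$ (respectively with $\ket{\gamma}$ on the other factor) as a positive sum $\sum_k \lambda_k T_k T_k\herm$ and forces each $T_k$ to vanish. In the first case the paper then invokes \autoref{lemma:nullOperatorProducts} to conclude that $U\ket{\gamma}$ and each $\ket{\phi_k}$ are product vectors sharing a common tensor factor; in the second case it uses the non-product hypothesis on $U$ directly, as you do. Your route is more elementary: by working with $\ker\eta$ as a subspace rather than with the eigenvectors of $\eta$ individually, you bypass the need for \autoref{lemma:nullOperatorProducts} and the associated bookkeeping of Schmidt decompositions. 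The paper's route, on the other hand, makes explicit the structure of the eigenvectors of $\eta$, which is perhaps more informative if one cares about \emph{how} $\eta$ is a product rather than merely \emph{that} it is. Your identification of where the non-product hypothesis enters (only in case~(ii)) matches the paper's proof exactly.
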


 	We show that frustration-free natural Hamiltonians are preserved by the reductions of \sectionref{sec:twoQubitIsometries} as follows.
	Let $H$ be a natural $2$-local Hamiltonian, and $h_{u,v}$ be a two-spin term in $H$.
	Define
	\begin{align}
		U_{u:u,v} \;=\; \ket{\psi_0}\bra{0}  +  \ket{\psi_1}\bra{1}
	\end{align}
	for orthonormal two-spin state vectors $\ket{\psi_0}, \ket{\psi_1}$ whose 
	span contains $\ker(h_{u,v})$; we require that $\ket{\psi_1}$ be entangled, which ensures that $U_{u:u,v}$ is not a product operator.
	Consider the terms $h'_{a,b} = U_{u:u,v}\herm  h_{a,b}  U_{u:u,v}$ which occur in the 
	Hamiltonian $H' = U_{u:u,v}\herm  H  U_{u:u,v}$.
	For any two-spin operator $h_{v,a}$ acting on $v$ and some other spin $a$, 
	the fact that $h_{v,a}$ has an entangled excited state implies in particular that it is not a product operator.
	Thus, $h'_{v,a}$ is a product operator only if it has full rank.
	If $H$ is frustration-free, $h'_{v,a}$ cannot have full rank; then $h'_{v,a}$ is not 
	a product operator, and in particular it will have entangled excited states.
	As $h'_{a,b} = h_{a,b}$ when $a,b \notin \ens{u,v}$, it follows that $H'$ is a natural 
	Hamiltonian; and as $H'$ has a kernel of the same dimension as $H$, it is 
	frustration-free as well.
	
	We may strengthen this result, to show that if $H$ is natural and frustration-free, 
	and also contains no two-spin terms of rank $1$, then the same is true of 
	$H' = U_{u:u,v}\herm  H  U_{u:u,v}$ as well.
	For any two-spin term $h_{v,a}$ acting on $v$ with rank at least $2$, consider 
	states $\ket{\varphi_0}, \ket{\varphi_1} \in \img(h_{v,a})$ such that 
	$\ket{\varphi_0} = \ket{\alpha}\ket{\beta}$ is a product state and $\ket{\varphi_1}$ is entangled; 
	Any subspace of $\cH \ox \cH$ of dimension at least $2$, such as 
	$\img(h_{u,v})$, contains a product state vector $\ket{\varphi_0}$; the existence of 
	$\ket{\varphi_1}$ is guaranteed by the definition of a natural Hamiltonian (compare also Ref.~\cite{Tarrach}).
	and choose real parameters $\lambda_0, \lambda_1 > 0$ such that
	\begin{equation}\label{psd}
		h_{v,a} - \lambda_0 \ket{\varphi_0}\bra{\varphi_0} - \lambda_1 \ket{\varphi_1}\bra{\varphi_1} \;\ge\; 0.
	\end{equation}
	Let $\eta_k = \ket{\varphi_k}\bra{\varphi_k}$ for $k \in \ens{0,1}$, and consider the images $\eta'_k$ under contraction by $U_{u:u,v}$ :
	\begin{align}
	 		\eta'_k
		\;=&\;\;
			U_{u:u,v}\herm  \eta_k  U_{u:u,v}
		\notag\\[1ex]=&\;\;
			\sum_{j,\ell} \bigl( \ket{j}\bra{\psi_j} \ox  \id \bigr) \bigl(  \id \ox \eta_k \bigr) \bigl( \ket{\psi_\ell}\bra{\ell} \ox  \id \bigr)
		\notag\\[0.5ex]=&\;\;
			\sum_{j,\ell} \ket{j}\bra{\ell} \ox M_{j,k} M_{\ell, k}\herm \;,
	\end{align}
	where we define $M_{j,k} = \big( \bra{\psi_j} \ox  \id \big) \big(  \id \ox \ket{\varphi_k} \big)$.
	By \autoref{lemma:nullOperatorProducts}, we have $M_{j,k} = 0$ only if both $\ket{\psi_j}$ and $\ket{\varphi_k}$ 
	are product operators; this implies that the operators $M_{1,k}$ in particular are non-zero, so that $\eta'_k \ne 0$ for any $k$.
	Note that
	\begin{align}
		\label{eqn:optorLowerBoundRank2}
			h'_{v,a}
		\;=&\; 
			U_{u:u,v}\herm  h_{v,a}  U_{u:u,v}
	\notag\\\ge&\; 
			\lambda_0   U_{u:u,v}\herm  \eta_0  U_{u:u,v} \;+\; \lambda_1   U_{u:u,v}\herm  \eta_1  U_{u:u,v}
	\notag\\=&\; 
			\lambda_0  \eta'_0 \;+\; \lambda_1  \eta'_1	\;;
	\end{align}
	because $H'$ has a non-trivial kernel, $h'_{v,a}$ has rank at most $3$, 
	in which case neither operator $\eta'_k$ has full rank.
	By \autoref{lemma:isometricContractionPreservingProducts}, $\eta'_1$ is not a product operator; as $\eta'_0$ is a product operator, these operators are linearly independent.
	Then, $\lambda_0 \eta'_0 \;+\; \lambda_1 \eta'_1$ has rank at least $2$; by \eqref{eqn:optorLowerBoundRank2}, the same is true of $h'_{v,a}$.
	If all of the terms in $H$ have rank $2$ or higher, the same then holds for $H'$ as well.

	Thus, if we apply the reductions of \sectionref{sec:reductionsByIsometries} to an initial Hamiltonian which is both natural and frustration-free, the resulting Hamiltonians will also be natural and frustration free.
	Furthermore, if $H$ contains no terms of rank $1$, then neither will the reduced Hamiltonians.
	Because the process of inducing constraints described in \sectionref{sec:homogeneousCase} also preserves the property of each term $h_{a,b}$ having entangled excited states, these invariants ensure that initial Hamiltonians with these properties (natural and frustation-free, and possibly containing no terms of rank $1$) may be reduced to homogeneous and complete Hamiltonians which have these same properties.
	We may then apply the results of \sectionref{sec:entangledCompleteHomogeneous} to these reduced Hamiltonians.

\begin{figure}[hbt]
\includegraphics[width=.55\columnwidth]{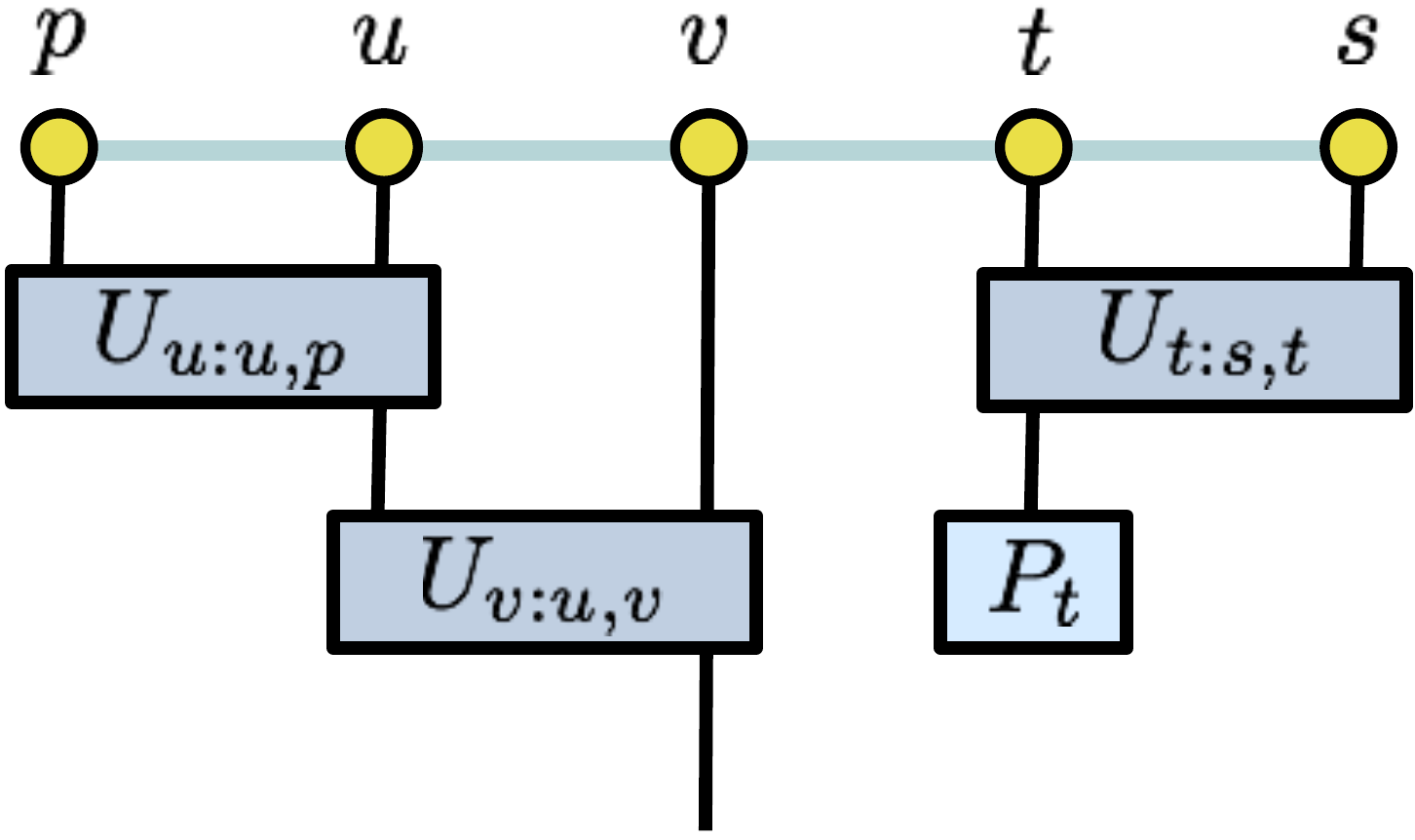}
\caption{A simple tree tensor network.}\label{fig:treeTensorIsometry}
\end{figure}

\section{Simulating ground spaces of frustration-free natural Hamiltonians}
\label{sec:sampling}

Building on the results of \sectionref{sec:naturalHamiltonians}, we now show how the reductions of \sectionref{sec:reductionsByIsometries} may be used to obtain a procedure for simulating states from the ground-state manifold of frustration-free natural Hamiltonians $H$.

\subsection{Tree tensor networks and matrix-product states}
\label{sec:simulateGroundSpace}

As we noted on \autopageref{remark:isometriesTreeTensorNetwork}, implicit in the reductions of \sectionref{sec:reductionsByIsometries} is that the isometric reduction from general Hamiltonians to homogeneous instances has the form of a tree-tensor network.
Thus, simulating the ground state manifold of any unfrustrated $2$-local Hamiltonian on spins may be reduced to that of a complete homogeneous Hamiltonian acting on a smaller system.
In this section, we sketch this reduction.

For any unfrustrated $2$-local Hamiltonian $H$ on $n$ spins, we may apply two-spin reductions as described in \sectionref{sec:reductionsByIsometries} until we obtain a homogeneous instance without single-spin terms.
We then attempt to induce additional constraints via \eqref{eqn:constraintInduction}, and apply further two-spin reductions if we obtain terms of rank $2$ or $3$.
If $H$ is frustration-free, this process will ultimately terminate in a complete homogeneous Hamiltonian $H_c$ on a subset $V_c \subset V$.

Consider the tensor network $T$ which performs the complete reduction as above; we describe $T$ in reverse order, as introducing new spins to represent a unitary embedding of $\ker(H_c)$ into $\ker(H)$.
The various “spin contraction” isometries $U_{u:u,v}$ as in \eqref{eqn:twoQubitIsometry} each have a single input-index and two output indices; the “spin deletion” isometries $P_v$ have no input indices at all.
These are applied sequentially, giving rise to an acyclic directed network.
As the in-degree of each tensor is at most $1$, it follows that the network contains no cycles at all (neither directed nor undirected): the output indices of each tensor represent spins whose state depends on only a single spin at the input.
Put another way: any spin $v$ which is introduced by an isometry $U_{u:u,v}$ may be considered a ``daughter spin'' of a unique parent $U$, which imposes a tree-like hierarchy 	
on the tensor network $T$, as illustrated in \autoref{fig:treeTensorIsometry}.
Strictly speaking, the quantum circuit or tree tensor network will have the structure of a \emph{forest} graph, which is a graph which may have more than one connected component, each of which are trees.

The roots of each tree are spins $u$ which are either prepared by an isometry $P_u$ derived from the removal of 
single-spin terms, or which correspond to free indices at the input of the tensor network $T$.

In the case that $H$ is non-degenerate, the resulting tensor network will (by that fact) simply be a tree-tensor network with no free input indices. 
Conversely, if the input Hamiltonian $H$ is degenerate, there will necessarily be free input indices, representing a domain consisting of a state space of dimension at least $2$.
In the latter case, the tensor network $T$ will yield ground states of the original unfrustrated Hamiltonian $H$ if and only if it operates on a state $\ket{\varphi} \in \ker(H_c)$ at the input, where $H_c$ is the complete homogeneous instance obtained by the Hamiltonian reductions.
Thus, if one may efficiently simulate states from the ground space of such a Hamiltonian, we may apply the network $T$ to simulate the ground space of the original Hamiltonian $H$.

\subsection{Efficiently simulating ground spaces of unfrustrated natural Hamiltonians}

Tensor networks $T$ with free input indices, and with a tree-like structure such as described above, can be efficiently simulated over inputs with low Schmidt measure~\cite{SM}, as follows.

For any observable $\Omega$ acting on $m$ spins, one may evaluate $\anglebr{\Omega}_H$ for the maximally mixed state over the ground-state manifold of $H$ by computing the expectation $\anglebr{T\herm \Omega T}_{\smash{H_c}}$ over the ground-state manifold of the complete homogeneous Hamiltonian $H_c$ obtained as described in \sectionref{sec:reviewBravyi}.
As the tensor $T$ has tree-stucture, the observable $T\herm \Omega T$ also acts on at most $m$ spins.
If we can obtain an orthonormal basis for $\ker(H_c)$ which may be succinctly described in terms of product states, we may evaluate expectation values of $T\herm \Omega T$ with respect to $m$-fold products of single spin states.

As we noted in \sectionref{sec:entangledCompleteHomogeneous}, $\ker(H_c)$ can be spanned by a collection of $n_c + 1$ product vectors (where $n_c$ is the number of spins on which $H_c$ acts).
Let $\ket{\Phi'_0}, \ket{\Phi'_1}, \ldots, \ket{\Phi'_{n_c}} \in \ker(H_c)$ be a collection of independent product vectors,
\begin{align}
		\ket{\Phi'_j}
	\;=\;
		\ket{\varphi_{j,1}} \ox \ket{\varphi_{j,2}} \ox \cdots \ox \ket{\varphi_{j,n_c}}	\;.
\end{align}
We may efficiently compute a projection of $\Omega_c$ onto $\ker(H_c)$ by performing a suitable transformation of the matrix
\begin{align}
		\label{eqn:skewRestriction}
		W(\Omega_c)
	\;=&\;
		\sum_{j,k = 0}^{n_c} \ket{j}\bra{\Phi'_j} \Omega_c \ket{\Phi'_k}\bra{k}	\;,
\end{align}
as follows.
The operator $W(\idop)$ in particular is positive definite; we thus have $W(\idop) = U \Delta U^\dagger$ for some unitary $U$ unitary and positive diagonal matrix $\Delta$.
It is not difficult to show that
\begin{align}
			\Delta^{-1/2}  U^\dagger \sum_{j = 0}^{n_c} \ket{j}\bra{\Phi'_j}
		\;=\;
			\sum_{j = 0}^{n_c} \ket{j}\bra{\Phi_j}	\;,	
\end{align}
for some orthonormal basis $\ket{\Phi_0}, \ldots, \ket{\Phi_{n_c}}$ of $\ker(H_c)$, by taking the product of the above operator with its adjoint.
Thus, the restriction of $\Omega_c$ to $\ker(H_c)$ with respect to the basis of states $\ket{\Phi_j}$ may be computed as
\begin{align}
		\bar\Omega
	\;=&\;
		\Delta^{-1/2} U^\dagger W(\Omega_c) U \Delta^{-1/2}	.
\end{align}
By considering operators $\Omega_c = T^\dagger \Omega T$, this allows us to compute the restriction of operators to the ground-space of $H_U$: see \autoref{fig:isometricSampling}.

\begin{figure}[hbt]
\includegraphics[width=.5\columnwidth]{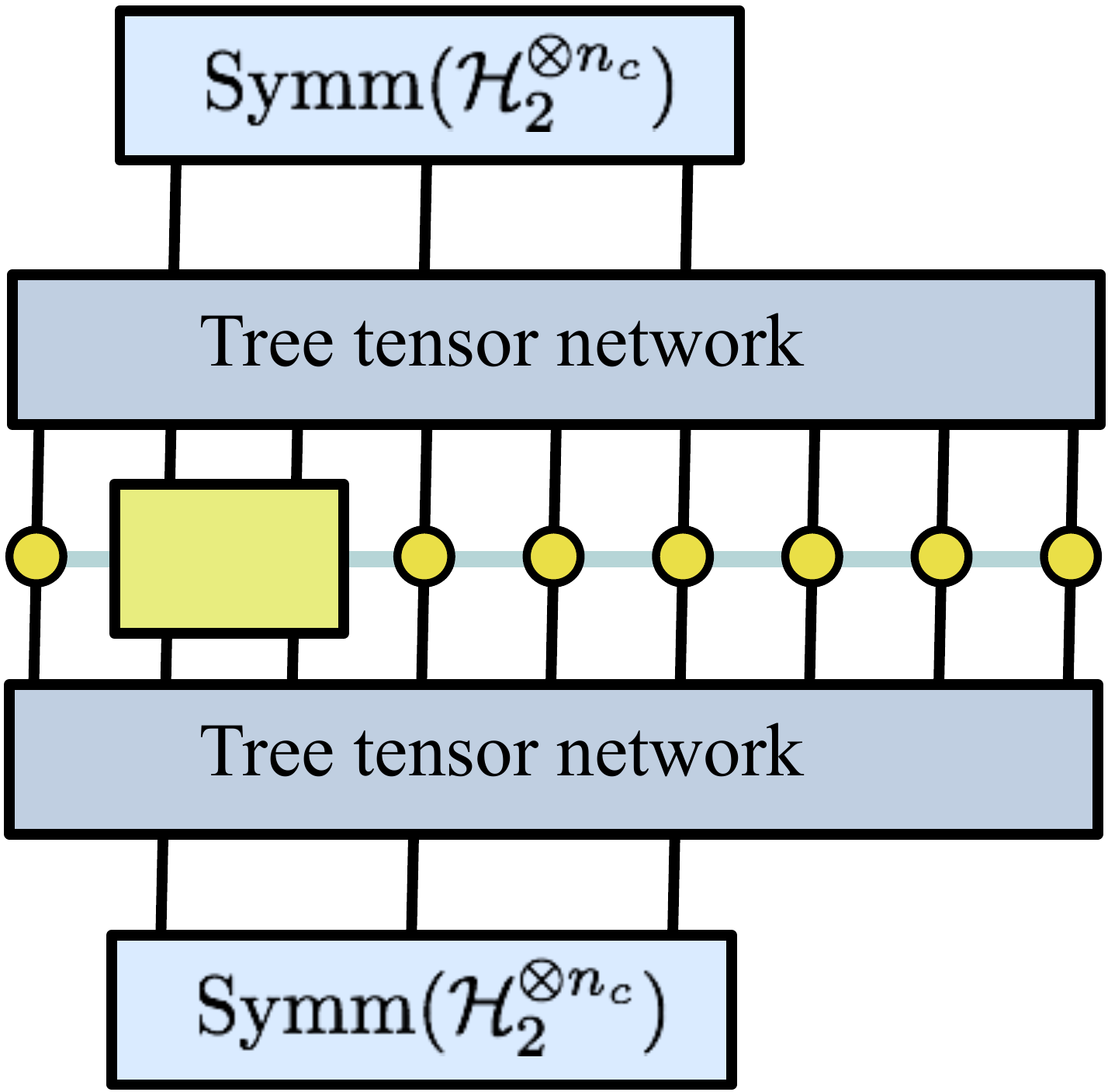}
\caption{Schematic diagram showing the isometric decomposition for efficient simulating the 
ground state manifold. A local Hamiltonian term supported on two sites
is marked yellow.}\label{fig:isometricSampling}
\end{figure}

We may thus efficiently estimate such observables with respect to ground-states of $H_U$: for constant $m$, the required inner products may be calculated as (sums of) scalar products of at most $n_c$ inner products over vector spaces of bounded dimension.
To evaltuate the value of $\Omega$ with respect to ground states of the input Hamiltonian $H$, it suffices to analyse the polynomially-sized operator $\bar \Omega$ representing the action of $\Omega$ on the ground-state manifold.

\section{Entanglement bounds for ground states of frustration-free natural Hamiltonians}
\label{sec:areaLaws}

The fact that the reductions of \sectionref{sec:reductionsByIsometries} preserve the class of natural Hamiltonians (as defined on \autopageref{sec:naturalHamiltonians}) allow us also to make more global statements about ground states for frustration-free Hamiltonians, again by reduction to the complete homogeneous case described in \sectionref{sec:entangledCompleteHomogeneous}.
In this section,  considering frustration-free natural Hamiltonians $H$, we demonstrate an area law for the entanglement possible in a ground state of $H$ between any subsystem $A \subset V$ and its environment $V \setminus A$.
We also consider some very general cases in which still stronger upper bounds on the entanglement may be obtained.

\subsection{Area law for frustration-free natural Hamiltonians}

We consider first the case where $A$ is a \emph{contiguous} subsystem (i{.}e{.} for which there is a path between any pair of spins in $A$, following edges in the interaction graph of the Hamiltonian $H$), and subsequently generalize the observation in this case to arbitrary subsystems $A$.
We first decompose
\begin{align}
		H
	\;=\;
		H_A + H_B + H_{A,B}	\;,
\end{align}
for $B = V \setminus A$, and where $H_A$ and $H_B$ contain all terms internal to $A$ and $B$, respectively.
We then apply the reductions of \sectionref{sec:reductionsByIsometries} to the subsystem $A$.
That is, we perform two-spin contractions as described in \sectionref{sec:twoQubitIsometries} for any two-spin terms in $H_A$ of rank $2$ or $3$, and perform spin-deletions as described in \sectionref{sec:oneQubitRemoval} for any single-spin terms in $H_A$.
Performing the constraint-induction process of \eqref{eqn:constraintInduction} --- again on the terms acting on $A$ alone --- and then reducing further reductions as necessary, we eventually obtain a Hamiltonian $\tilde H$ of the form
\begin{align}
	\label{eqn:reducedBipartiteHamiltonian}
		\tilde H
	\;=\;
		\tilde H_{\tilde A} + H_B + \tilde H_{\tilde A, B}	\;,
\end{align}
where $\tilde A \subset A$ is the set of spins remaining after the reduction process, and where $\tilde H_{\tilde A}$ is a homogeneous and complete Hamiltonian. The Hamiltonians $\tilde H_{\tilde A}$ and $\tilde H_{\tilde A,B}$ contain the terms derived from $H_A$ and $H_{A,B}$ respectively by the reduction process. 
In other words, we perform a complete tree tensor reduction on the subsystem $A$, until we obtain a Hamiltonian whose restriction to $A$ is homogeneous and complete.
As $\dim\ker(\tilde H) = \dim\ker(H) > 0$, we have $\ker(\tilde H_{\tilde A}) > 0$ and $\ker(H_B) > 0$ as well; in particular, $\ker(\tilde H) \subset \ker(\tilde H_{\tilde A})$, so that $\tilde H_{\tilde A}$ is frustation-free.
Because $\ker(\tilde H) \subset \ker(H_B)$ as well, we then have
\begin{align}
		\label{eqn:tensorDecomposeKernels}
		\ker(\tilde H)	\subset \ker(\tilde H_{\tilde A}) \ox \ker(H_B)
\end{align}
taking the restrictions of $\tilde H_{\tilde A}$ and $H_B$ to their respective subsystems $\tilde A$ and $B$.
Let $\tilde n = |\tilde A|$: as $\tilde H_{\tilde A}$ is also homogeneous and complete, it has kernel of dimension $\tilde n + 1$ by \sectionref{sec:entangledCompleteHomogeneous}.

In the case where $A$ contains multiple components $A_1, A_2, \ldots, A_k$ with respect to the interaction
graph of $H$ (where each $A_j$ is disconnected from the others but connected internally), we may perform the Hamiltonian reductions of \sectionref{sec:reductionsByIsometries} to each component independently.
We may further decompose the Hamiltonian $\tilde H_{\tilde A}$ obtained in \eqref{eqn:reducedBipartiteHamiltonian} as
\begin{align}
		\tilde H_{\tilde A}
	\;=\;
		\tilde H_{\tilde A_1}
		+ 
		\cdots
		 +
		\tilde H_{\tilde A_k}	,
		\;\;\;
			\text{where $\tilde A_j = \tilde A \inter A_j$}.
\end{align}
As $\tilde H_{\tilde A}$ is unfrustrated, each of the sub-Hamiltonians $\tilde H_{\tilde A_j}$ is unfrustrated as well, in which case we may write
\begin{align}
		\label{eqn:tensorDecomposeKernels-b}
		\ker(\tilde H_{\tilde A})
	\;\subset\;
		\ker(\tilde H_{\tilde A_1})	\ox	\cdots\ox	\ker(\tilde H_{\tilde A_k}),
\end{align}
similarly to \eqref{eqn:tensorDecomposeKernels}.
Then, the dimension of $\ker(H_{\tilde A})$ is bounded by the product of $\dim\ker(\tilde H_{\tilde A_j}) = \tilde n_j + 1$ for each subsystem, where $\tilde n_j = \card{\smash{\tilde A}_j}$.
Let $\alpha_j$ be the number of spins in $A_j$ which are adjacent in the spin lattice to spins in $B$, and let $\tilde \alpha_j \le \alpha_j$ be the number of such spins in $\tilde A_j$. 
For any $H$ with nearest neighbor interactions on a lattice in finitely many dimensions (in the graph theoretical sense),
there exist scalars $c, K > 0$ such that
\begin{equation}
	\label{eqn:constantDimension}
	\alpha_j \ge K n_j^c
\end{equation}	
for each subsystem $A_j$.
We may then bound on the dimension of $\ker(\tilde H_{\tilde A})$ in terms of these ``boundary spins'' as
\begin{align}
		\log(\dim & \ker(\tilde H_{\tilde A}))
	 \;\le\;
		\sum_{j = 1}^k
			\log(\dim \;\! \ker(\tilde H_{\tilde A_j}))
	\notag\\\le&\;
		\sum_{j = 1}^k
			\log(\tilde n_j + 1)
	\;\le\;
		\sum_{j = 1}^k
			\tilde n_j^c
	 \;\le\; 
		\frac{\alpha}{K},	
\end{align}
where $\alpha = \alpha_1 + \cdots + \alpha_k$ is the number of spins in $A$ adjacent to elements of $B$.

From this bound on $\dim\ker(\tilde H_{\tilde A})$, it follows that  any pure state $\ket{\Phi}$ in the ground space of $\tilde H$ 
has Schmidt measure~\cite{SM} at most ${\alpha}/{K}$ across the partition $\tilde A + B$, and so can support at most this many e-bits of entanglement between 
$\tilde A$ and $B$.
Because any state vector $\ket{\Psi} \in \ker(H)$ can be obtained from some vector $\ket{\Phi} \in \ker(\tilde H)$ by a 	
network of isometries acting only on spins in $A$,  it follows that any state vector $\ket{\Psi}$  in the ground-state manifold of $H$ contains at most ${\alpha}/{K}$ 
e-bits of entanglement between $A$ and $B$. Note the similarity to ground states of low Schmidt rank close to factorizing ground states in 
Heisenberg models \cite{Factorization}.
In case the ground state $\rho$ is degenerate, each pure state in the spectral decomposition of $\rho$
will have that property. As the entanglement of formation is convex (this usually being taken as a necessary
property of an entanglement monotone), one obtains the bound 
\begin{equation}
	E_F(\rho) \le \frac{\alpha}{K}
\end{equation}	
for the entanglement of formation between $A$ and $B$.
Finally, let $\partial A$ be the set of edges between $A$ and $B$.
By definition, for each edge in $\partial A$, there is a spin in $A$ which is adjacent to some spin in $B$; then we have $\alpha \le |\partial A|$, so that
\begin{align}
	E_F(\rho) \le \frac{|\partial A|}{K} .
\end{align}
Thus, the amount of entanglement which can be supported by a ground state of $H$ between $A$ and $B$ is 
governed by an area law. 
We summarize:

\begin{proposition}[Area law]
\label{propn:areaLaw}
Let $H$ be an unfrustrated natural
Hamiltonian on a lattice $V$, and denote with $\rho$ its (possibly degenerate)
ground state. Then for any subsystem $A\subset V$, the 
entanglement of formation of $\rho$
with respect to $A$ and $V\setminus A$ satisfies an
area law, i.e. there exists a constant $C>0$ of the lattice model such that
\begin{equation}
	E_F(\rho) \le C |\partial A|.
\end{equation}
\end{proposition}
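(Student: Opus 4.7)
The plan is to reduce the entanglement bound across $A$ and $V \setminus A$ to a bound on the dimension of the kernel of a smaller, homogeneous, complete, natural Hamiltonian acting only on the reduced image of $A$, and then convert that dimension bound into an entanglement bound via the Schmidt measure and convexity of $E_F$.

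Concretely, set $B = V \setminus A$ and split $H = H_A + H_B + H_{A,B}$, where $H_A$ and $H_B$ collect the terms internal to $A$ and $B$ respectively. I would apply the reductions of \sectionref{sec:reviewBravyi} --- two-spin contractions, single-spin deletions, and constraint induction --- to $H_A$ alone. Each such reduction acts as an isometry on the $A$-register tensored with the identity on $B$, so it lifts to an isometry on the full system whose image contains $\ker(H)$. The cross-terms of $H_{A,B}$ transform accordingly, but what matters is that by \sectionref{sec:reductionsPreserveNaturality} the $A$-internal piece remains natural and frustration-free at every step. Iterating until no further reductions apply, one obtains $\tilde H = \tilde H_{\tilde A} + H_B + \tilde H_{\tilde A, B}$ with $\tilde A \subset A$ and $\tilde H_{\tilde A}$ homogeneous, complete, natural, and frustration-free.

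Now \sectionref{sec:entangledCompleteHomogeneous} gives $\dim \ker(\tilde H_{\tilde A_j}) = \tilde n_j + 1$ on each connected component $\tilde A_j$ of $\tilde A$ (with $\tilde n_j$ the number of spins in $\tilde A_j$), so a tensor-product argument yields $\dim \ker(\tilde H_{\tilde A}) \le \prod_j (\tilde n_j + 1)$. Since $\ker(\tilde H) \subset \ker(\tilde H_{\tilde A}) \otimes \ker(H_B)$, any pure vector $\ket{\Phi} \in \ker(\tilde H)$ has Schmidt rank at most $\prod_j (\tilde n_j + 1)$ between $\tilde A$ and $B$, and so Schmidt measure at most $\sum_j \log(\tilde n_j + 1)$. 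Invoking the polynomial boundary-scaling estimate $\alpha_j \ge K n_j^c$ valid on any finite-dimensional nearest-neighbor lattice, I bound each $\log(\tilde n_j + 1)$ by $\alpha_j / K$, summing to $\alpha/K$ where $\alpha$ counts ``boundary spins'' of $A$. Pulling back along the reduction isometries (which act within $A$) preserves this Schmidt bound on pure vectors in $\ker(H)$. For a mixed ground state $\rho$, convexity of $E_F$ applied to its spectral decomposition yields $E_F(\rho) \le \alpha/K$, and $\alpha \le |\partial A|$ finally gives the area law with $C = 1/K$.

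The main obstacle is justifying that the reductions of \sectionref{sec:reviewBravyi}, applied to $H_A$ alone rather than to the full $H$, really behave as claimed: namely that ``natural'' and ``frustration-free'' are preserved for the $A$-internal piece even though only the $A$-part of the interactions is being simplified, and that constraint induction can be run to completion within $A$ without appealing to any constraints coming from $H_B$ or the cross-terms. Once those invariants from \sectionref{sec:reductionsPreserveNaturality} are in hand, the remaining steps --- the dimension formula from \sectionref{sec:entangledCompleteHomogeneous}, the tensor-product decomposition across components, the Schmidt-measure calculation, and the convexity argument --- are essentially mechanical.
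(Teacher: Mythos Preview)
Your proposal is correct and follows essentially the same route as the paper: reduce the $A$-internal Hamiltonian via the isometric contractions and constraint induction of \sectionref{sec:reviewBravyi} to a homogeneous, complete, natural $\tilde H_{\tilde A}$, use the dimension formula $\dim\ker(\tilde H_{\tilde A_j}) = \tilde n_j + 1$ from \sectionref{sec:entangledCompleteHomogeneous} together with the tensor decomposition across components, convert via the lattice scaling $\alpha_j \ge K n_j^c$ to a Schmidt-measure bound $\alpha/K$, and finish with convexity of $E_F$ and $\alpha \le |\partial A|$. The ``main obstacle'' you flag --- that the reductions applied only to $H_A$ preserve naturality and frustration-freeness of the $A$-piece --- is exactly what the paper handles by invoking \sectionref{sec:reductionsPreserveNaturality} together with the observation $\ker(\tilde H) \subset \ker(\tilde H_{\tilde A})$.
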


\subsection{Stronger entanglement bounds for contiguous subsystems}

The above analysis imposes no additional constraints, beyond the requirement that $H$ be natural and frustration-free.
We may obtain still stronger bounds --- by the logarithm of the system size, or even by a \emph{constant} --- on the entanglement between $A$ and its environment $B = V \setminus A$, under fairly general conditions on the subsystem $A$ when it is a contiguous subsystem.

\subsubsection{Contiguous subsystems in general}

Implicit in the analysis of the previous subsection is a stronger entanglement bound for contiguous subsystems in general: we observe that if $A$ consists of a single component, we have
\begin{align}
		\dim\ker(\tilde H_{\tilde A})
	\;=\;
		\tilde n + 1
\end{align}
for $\tilde n = |\tilde A|$ by the analysis of \sectionref{sec:entangledCompleteHomogeneous} (where $H_{\tilde A}$ is the reduced Hamiltonian acting on the subsystem $A$ described in \eqref{eqn:reducedBipartiteHamiltonian}).
By a similar analysis, if $\alpha$ is the number of spins in $A$ adjacent to at least one spin in $B$, we may use \eqref{eqn:constantDimension} to obtain
\begin{align}
		\log(\dim & \ker(\tilde H_{\tilde A}))
	 =
		\log(\tilde n + 1)
	\le
		\frac{\log(\alpha / K + 1)}{c} \:\!.
\end{align}
As $\alpha \le |\partial A|$, we may then obtain:

\begin{proposition}[Logarithm law for contiguous systems]
Let $H$ be an unfrustrated natural Hamiltonian on a lattice $V$, and denote with $\rho$ its (possibly degenerate)
ground state.
There then exists a constant $C > 0$ of the lattice model such that, for any \emph{contiguous} subsystem $A\subset V$, the entanglement of formation of $\rho$ with respect to $A$ and $V\setminus A$ satisfies
\begin{equation}
	E_F(\rho) \le C \log|\partial A|.
\end{equation}
\end{proposition}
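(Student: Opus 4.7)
The plan is to adapt the area-law proof of \autoref{propn:areaLaw} to exploit the sharper bound on $\dim\ker(\tilde H_{\tilde A})$ that becomes available precisely when $A$ is a single connected component. First I would apply the same reduction strategy as before: decompose $H = H_A + H_B + H_{A,B}$ with $B = V \setminus A$, then run the reductions of \sectionref{sec:reductionsByIsometries} together with the constraint-induction process of \eqref{eqn:constraintInduction} \emph{on $H_A$ alone}, until the restriction to $A$ becomes homogeneous and complete. This yields $\tilde H = \tilde H_{\tilde A} + H_B + \tilde H_{\tilde A, B}$ with $\tilde A \subset A$, and by \sectionref{sec:reductionsPreserveNaturality} the reduced block $\tilde H_{\tilde A}$ remains natural, homogeneous, and complete.

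The key improvement over the multi-component case is immediate once one observes that contiguity of $A$ in the interaction graph means that $\tilde H_{\tilde A}$ does not split into independent sub-Hamiltonians. Hence \sectionref{sec:entangledCompleteHomogeneous} gives the exact equality $\dim\ker(\tilde H_{\tilde A}) = \tilde n + 1$ with $\tilde n = |\tilde A|$, in place of the product $\prod_j (\tilde n_j + 1)$ used in the general area-law argument. Using the tensor inclusion \eqref{eqn:tensorDecomposeKernels}, every pure state vector $\ket{\Phi} \in \ker(\tilde H)$ has Schmidt rank at most $\tilde n + 1$ across the cut $\tilde A \mid B$, and hence Schmidt measure at most $\log(\tilde n + 1)$. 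Since any $\ket{\Psi} \in \ker(H)$ is obtained from some such $\ket{\Phi}$ by a network of isometries acting only within $A$, the entanglement between $A$ and $B$ in $\ket{\Psi}$ is bounded by the same quantity.

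Next I would translate this spin-count bound into a boundary bound using \eqref{eqn:constantDimension}. Since reductions only decrease the spin count, $\tilde n \le n$, and the lattice-geometric bound $\alpha \ge K n^c$ gives $\tilde n \le (\alpha/K)^{1/c}$, so that
\begin{equation*}
	\log\bigl(\tilde n + 1\bigr) \;\le\; \tfrac{1}{c}\log\bigl(\alpha/K\bigr) + O(1) \;\le\; \tfrac{1}{c}\log|\partial A| + O(1),
\end{equation*}
using $\alpha \le |\partial A|$. Absorbing constants depending only on the lattice into a single $C > 0$ proves the bound for each pure state in the spectral decomposition of $\rho$; convexity of $E_F$ then lifts the estimate to the (possibly mixed) ground state.

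The only subtle point is to ensure that running the reductions and constraint-induction strictly within $A$ does leave a \emph{single} complete homogeneous Hamiltonian on $\tilde A$, rather than a direct sum over several components. This is the reason the hypothesis of contiguity cannot be dropped: it is exactly the condition that distinguishes the product estimate behind \autoref{propn:areaLaw} from the single-factor bound $\tilde n + 1$ here, and it follows directly from the fact that the reductions of \sectionref{sec:reductionsByIsometries} never delete or split edges except by contracting them, together with the observation that constraint-induction on a connected $A$ produces a complete Hamiltonian on $\tilde A$. Once this is in place, the rest is routine.
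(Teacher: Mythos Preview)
Your proposal is correct and follows essentially the same route as the paper: reduce $H_A$ to a complete homogeneous natural Hamiltonian $\tilde H_{\tilde A}$, use contiguity to get the single-factor bound $\dim\ker(\tilde H_{\tilde A}) = \tilde n + 1$ from \sectionref{sec:entangledCompleteHomogeneous}, and then convert $\tilde n$ into $|\partial A|$ via \eqref{eqn:constantDimension}, with convexity of $E_F$ handling the degenerate case. The only minor imprecision is in your final paragraph, where you say the reductions ``never delete or split edges except by contracting them'': single-spin deletions do remove edges, but for natural Hamiltonians the induced single-spin terms on every neighbor cause the whole component to collapse (cf.\ the discussion in \sectionref{sec:areaLaws} on rank-$3$ terms), so connectivity of $\tilde A$ is still guaranteed and your conclusion stands.
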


\subsubsection{Subsystems acted on by many high-rank Hamiltonian terms}

In the above result, we have neglected the difference in the sizes of the subsystem $A$, and the reduced subsystem $\tilde A \subset A$.
The difference in their sizes will be precisely the number of isometric reductions performed to obtain $\tilde A$ from $A$.
Each isometric reduction corresponds to either an \emph{edge contraction} in the interaction graph $G$ of  the Hamiltonian $H$, or a \emph{vertex deletion} in $G$, yielding an interaction graph $G'$ for the Hamiltonian $H'$.
Two-spin isometries $U_{u:uv}$ represent the reduced state-space of the two-spin subsystem $\ens{u,v}$ as the image of a single spin under an isometry: the corresponding reduction may thus be represented as a ``contraction'' of two spins into one, as illustrated in \autoref{fig:graphReduction}.
Single-spin terms $h_u$ may be represented by loops on vertices: isometric reductions arising from terms $h_{u,v}$ of rank $3$ also yield a loop on the contracted vertex.
Spin-removal reductions $P_u$ correspond to the deletion of a vertex $u$ with a loop, which removes all edges $au$ incident to $u$, possibly replacing them by loops on the neighbors $a$.

This representation of Hamiltonian reductions in terms of graphs is underdetermined, in that it is not always possible to determine the ranks of the reduced Hamiltonian $H'$ from those of the Hamiltonian $H$ prior to contraction.
However, the correspondence to graph reductions motivates a simple observation.
Consider a subsystem $A$, and consider the Hamiltonian $H_A$ together with its interaction graph $G_A$.
We may ``colour'' or ``rank'' the edges of $G_A$ according to whether the term corresponding to each edge is rank-$1$ 
(which we call ``light'' edges) or has rank $2$ or $3$ (which we call ``heavy'' edges).
The two-spin isometric reductions of \sectionref{sec:twoQubitIsometries} 
required to obtain $\tilde H_{\tilde A}$ corresponds to contractions of all heavy edges in $G_A$.
As such contractions preserve connectivity, this implies that the interaction graph $\tilde G_{\tilde A}$ 
corresponding to $\tilde H_{\tilde A}$ has as many vertices as there are connected components in the ``heavy subgraph'' of $G_A$.
In particular, if the number of ``heavy'' connected components (components connected only by heavy edges)  is bounded above by some parameter $\beta$, we then obtain $\tilde n = |\tilde A| \le \beta$, so that
\begin{align}
		\log(\dim\ker(\tilde H_{\tilde A})) \le \log(\beta + 1).
\end{align}
A consequence of this is that if $H$ is a frustration-free natural Hamiltonian which contains only terms of rank $2$ or $3$, all edges in $G_A$ will be heavy, so that it consists of a single heavy component; we then have $\log(\dim\ker(\tilde H_{\tilde A})) = 1$.
In this case, there is at most one e-bit of entanglement between $A$ and any other, disjoint subsystem in the lattice.

We may further refine this observation by considering the impact of Hamiltonian terms of rank $3$.
More generally, we may consider rank-$1$ single-spin terms in the reduced Hamiltonians, arising either from rank-$1$ 
terms in preceding Hamiltonians, or from performing an isometric reduction on terms of rank $3$.
Such single-spin terms correspond to loops on vertices in the interaction graphs $G'_{A'}$ of the reduced Hamiltonians $H'_{A'}$.
In the case of a frustration-free natural Hamiltonian containing such terms, we may show that the Hamiltonian $H$ is non-degenerate with a ground state consisting essentially of a product of single-spin states (together with a single two-qubit entangled state if the original Hamiltonian $H$ contains a term of rank $3$).
Consider the effect of preferentially performing single-spin deletions in the process of reducing $H$ by isometries: for a natural Hamiltonian, we may easily verify that removal of such a vertex $u$ (i{.}e{.} performing the single-spin removal reduction of \sectionref{sec:oneQubitRemoval}) will induce loops corresponding to single-spin terms on all neighbors of $u$.
These spins may then be removed in turn, inducing still further loops; by the requirement that the original interaction graph $G$ be connected, this ultimately results in the removal of every spin on which $H$ acts, each by independent single-spin isometries which describe a 
fixed single-qubit state vector $\ket{\varphi_u}$.
As a result, the \emph{entire lattice} contains no entanglement, or at most one e-bit if $H$ contained a single rank-$3$ term giving rise to a ``seed'' loop; any subsystem $A$ which is not acted on by the rank-$3$ term therefore contains no entanglement, nor has any entanglement with its environment.

The above exhibits the fragility of the condition of  frustration-freeness: it follows, for instance, that any natural Hamiltonian $H$ which contains as many as two terms of rank $3$ is necessarily frustrated (i{.}e{.} does not have a ground space characterized by those of its interaction terms).
Because the same unique ground state must be produced by any reduction, e{.}g{.} in which we first perform two-qubit isometries, it follows that each two-qubit isometry in such a reduction must also map the single-spin states (describing the unique ground state of the reduced Hamiltonians) to 
product states, which is of course highly unlikely if instead one considers arbitrarily chosen two-qubit 
isometries and single-product input states.
These observations may be used together with the random satisfiability results of Ref.~\cite{RandomQSat} 
to suggest that ``exact'' frustration-freeness is likely to be rare in physical systems; small perturbations are likely to cause frustration.
This is nothing but a manifestation of a fragility against spontaneous symmetry breaking.
However, in \sectionref{sec:almostUnfrustrated}, we suggest ways in which systems which differ only slightly from frustation-free systems may be examined using the techniques of \sectionref[Sections]{sec:sampling} and\sectionref[]{sec:areaLaws}.

\section{Different models of frustation-free  Hamiltonians}

In this section, we consider frustration-free Hamiltonians $H$, but suspend our earlier restriction to natural Hamiltonians (as described on \autopageref{sec:naturalHamiltonians}) in order to consider different models of Hamiltonians that are of interest.
In doing so, we will compare the resulting analysis to the case of frustration-free natural Hamiltonians in \sectionref{sec:areaLaws}.

\subsection{Rank-two terms lacking entangled excited states}
\label{sec:rank2unnatural}

Any Hamiltonian term $h_{a,b}$ in $H$ which has rank $2$ and has only product states orthogonal to its 
ground space is of the form $h_{a,b} = \ket{\varphi}\bra{\varphi} _a \ox \eta_b$ (or the reverse tensor product), where 
$\eta$ is a single-spin operator of full rank and $\ket{\varphi} \in \cH$.
This operator has the same kernel as the single-spin operator $\ket{\varphi}\bra{\varphi}_a \ox \id_b$; therefore, if $H$ 
is frustration-free, we may perform this substitution without any change to the ground state manifold or its properties.
As any rank-$3$ operator has entangled states orthogonal to its (unique) ground state, we may therefore restrict to the 
case where ``non-natural'' terms $h_{a,b}$ occuring in $H$ have rank $1$, so long as we permit input 
Hamiltonians with single-spin terms.

\subsection{Unfrustrated translationally invariant Hamiltonians}
\label{sec:unfrustratedTI}

Consider a frustration-free Hamiltonian $H$ in which the interaction terms $h_{a,b}$ of each spin $a$ is the same for all of its neighbors $b$.
If $H$ is not natural, it follows that $h_{a,b} = \ket{\alpha}\bra{\alpha}_a \ox \ket{\beta}\bra{\beta}_b$ for some 
states $\ket{\alpha}, \ket{\beta} \in \cH$; and by a suitable choice of basis on each site, we may without loss 
of generality let $\ket{\alpha} = \ket{\beta} = \ket{1}$.

Consider a ground state vector $\ket{\Phi}$ of the Hamiltonian.
For each site $a$, if the state vector of $\ket{\alpha}$ is not given by $\ket{1}$, it follows that all of the neighbors of $a$ are in 
$\ket{1}$; and conversely, if all of the neighbors of some site $a$ are in $\ket{1}$, the site $a$ may be in an arbitrary 
single-spin state without contributing to the energy of the global state.
It follows that the ground-state manifold of $H$ consists of all superpositions of product states in which all sites are in 
 $\ket{1}$, except for some set of mutually non-adjacent sites $A \subset S$, whose spins may have arbitrary states  (including states which are entangled with other sites in $S$).
In particular, for bipartite lattices, this includes states in which the entire ``even'' sublattice of sites an even distance from the origin are in 
 $\ket{1}$, and the opposite ``odd'' sublattice may have an arbitrary entangled state.

Thus, if $H$ is isotropic and frustration-free, then without loss of generality it is either natural, or contains subspaces in which large subsystems of the lattice are essentially unconstrained, and may occupy states with arbitrarily large entanglement content.
Consequently, one may expect that unfrustrated translationally-invariant Hamiltonians should have interaction terms with entangled excited states, i{.}e{.} be given by natural Hamiltonians.

\subsection{Unfrustated lattices with randomly located product terms and percolation}

Finally, we wish to consider a class of \emph{random} Hamiltonians which includes non-natural Hamiltonians, and compare the 
behaviour of their ground-state manifolds to natural Hamiltonians.
If one distributes random Hamiltonian terms $h_{a,b}$ over nearest-neighbor pairs in an arbitrary lattice, then they will almost certainly have an entangled excited state, as the highest-energy eigenstate of each term will be a product state with probability zero.
This remains true even if one constrains each interaction term $h_{a,b}$ in the lattice to have ranks described by integers $r_{a,b} \in \ens{1,2,3}$ selected according to any distribution, including the case where every term has rank $1$.
In order to obtain a random model of non-natural Hamiltonians, we must explicitly designate certain interactions $h_{a,b}$ to be rank-$1$ product operators (non-natural terms of higher rank being subject to the remarks of \sectionref{sec:rank2unnatural} above), and consider the scaling of the resulting lattice model.

Consider a $d$-dimensional rectangular lattice, in which each term $h_{a,b}$ has rank-$1$, and for each term we randomly determine whether $h_{a,b}$ is a product term (i{.}e{.} satisfies $h_{a,b} = \ket{\alpha}\bra{\alpha}_a \ox \ket{\beta}\bra{\beta}_b$ for some $\ket{\alpha},\ket{\beta} \in \cH$) or an entangled term (satisfies $h_{a,b} = \ket{\gamma}\bra{\gamma}$ for some entangled $\ket{\gamma} \in \cH \ox \cH$).
The probability that $h_{a,b}$ is entangled is given by some fixed $p > 0$, independently for each edge.
Having determined whether $h_{a,b}$ is entangled or not, we select a random rank-$1$ projector for $h_{a,b}$ subject to that constraint on $h_{a,b}$.
Considering only frustration-free Hamiltonians $H$ constructed under such a model \cite{Footnote4}
and subsystems $A$ of the lattice on which $H$ acts, we wish to determine the dimension of the ground-state manifold $M = \ker(H_A)$; by a similar analysis as in \sectionref{sec:areaLaws}, this will indicate how close the ground-states of $H$ come to obeying an entanglement area law.

As we noted in \sectionref{sec:entangledCompleteHomogeneous}, the process of inducing rank-$1$ constraints as in \eqref{eqn:constraintInduction} will yield entangled (``natural'') constraints from two other entangled constraints.
Consider the subgraph of the lattice consisting of entangled constraints: it follows that any subsystem $A \subset V$ of the lattice which is connected only by entangled constraints forms a subsystem for which $H_A$ is a natural Hamiltonian, with a kernel of dimension at most $|A| + 1$.
Conversely, we may easily show that for any product term $h_{a,u} = \ket{\alpha}\bra{\alpha}_a \ox \ket{\beta}\bra{\beta}_u$, the constraints $\tilde h_{a,v}$ induced by $h_{a,b}$ together with any other constraint $h_{u,v}$ will also be a product term (regardless of whether $h_{u,v}$ is a product term).
Thus, such product terms $h_{a,u}$ in the Hamiltonian represent obstacles to the induction of constraints which would yield bounds on entanglement: as we noted in \sectionref{sec:unfrustratedTI} above, the prevalence of product terms in a Hamiltonian $H$ allow for the effective decoupling of large subsystems in the ground-state manifold of $H$, yielding extremely high degeneracy.

These observations suggest an approach to bounding $\dim(M)$ using \emph{percolation theory}~\cite{Grimmett} to bound the number and size of components connected by entangled edges in a large convex subset (for a review on applications of percolation theory in quantum information, see Ref.\
\cite{Kieling}).
We may consider the worst case scenario in which no additional constraints may be induced between any two subregions $A_1, A_2$ which are internally connected by entangling terms, but separated by a barrier of product terms which effectively decouple the subsystems $A_1$ and $A_2$.
If the probability $p$ is above the percolation threshhold $p_c$ of the lattice, we may apply the following results:

\begin{proposition}[{\cite[Theorem 4.2]{Grimmett}}]
	\label{prop:countClusters}
	Let $A$ be a hypercube consisting of $n$ vertices, in a $d$-dimensional rectangular lattice with edge-percolation probability $p$.
	Then there exists a positive real $\kappa_p \in \R$ such that the number of connected components in $A$ grows as $\kappa_p n$, as $n \to* \infty$.
\end{proposition}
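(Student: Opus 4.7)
The plan is to interpret the cluster count via a per-vertex identity and then apply an ergodic averaging argument. Write $C_A(v)$ for the open cluster of $v$ in the subgraph restricted to the vertex set $A$, and let $K(A)$ denote the number of such clusters. Define $\kappa_p := \ee[1/\card{C(0)}]$, with the convention $1/\infty = 0$ so that infinite ambient clusters contribute zero. The starting identity
\[
		K(A) \;=\; \sum_{v \in A} \frac{1}{\card{C_A(v)}}
\]
is immediate, since each cluster $C \subset A$ contributes $\sum_{v \in C} 1/\card{C} = 1$ to the right-hand side. Positivity of $\kappa_p$ follows from $\Pr[\card{C(0)} = 1] \ge (1-p)^{2d} > 0$, so a nonvanishing linear growth is plausible.

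The core of the argument is to replace the restricted cluster $C_{A_n}(v)$ by the ambient cluster $C(v)$ on $\zz^d$. Because the Bernoulli bond-percolation measure is invariant and ergodic under the group of lattice translations, the multidimensional pointwise ergodic theorem applied to the function $v \mapsto 1/\card{C(v)}$ yields
\[
		\frac{1}{n}\sum_{v \in A_n} \frac{1}{\card{C(v)}} \;\longrightarrow\; \kappa_p
\]
almost surely along any sequence of hypercubes $A_n$ with $\card{A_n} = n \to \infty$. It then remains to bound the boundary error $n^{-1}\sum_{v \in A_n} \card{1/\card{C_{A_n}(v)} - 1/\card{C(v)}}$ and show that it vanishes in the limit.

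This error estimate splits according to whether the ambient cluster of $v$ is finite or infinite. In the finite case, for any threshold $\Lambda$, vertices with $\card{C(v)} \le \Lambda$ lying at distance exceeding $\Lambda$ from $\partial A_n$ satisfy $C_{A_n}(v) = C(v)$ exactly, while the remaining vertices form a set of density at most $\card{\partial A_n}/n + \Pr[\card{C(0)} > \Lambda]$ by translation invariance; both contributions vanish by sending first $n \to \infty$ and then $\Lambda \to \infty$. In the supercritical regime $p > p_c$, the ambient cluster is infinite with positive probability and contributes zero to $\kappa_p$; standard shape-theorem estimates imply that the restricted infinite cluster $C_{A_n}(v)$ has size linear in the side length of $A_n$ for a bulk fraction of such vertices $v$, so $1/\card{C_{A_n}(v)} \to 0$ and the contribution is again negligible.

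The principal obstacle is precisely this supercritical control: one must rule out a pathological scenario in which the ambient infinite cluster, restricted to $A_n$, shatters into anomalously many large-but-finite pieces whose reciprocals sum to a nonzero error. Ruling this out requires nontrivial input from percolation theory---uniqueness of the infinite cluster together with quantitative estimates on its intersection with large boxes. Once this is in hand, combining the ergodic limit with the vanishing boundary error gives $K(A_n)/n \to \kappa_p$, proving the claim. Alternatively, as Grimmett does in the cited reference, one may deduce the conclusion directly from the subadditive ergodic theorem applied to $K$, noting that $K$ is subadditive on disjoint unions up to boundary corrections; this bypasses the case split at the cost of invoking a more abstract limit theorem.
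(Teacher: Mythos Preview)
The paper does not give its own proof of this proposition: it is quoted verbatim as Theorem~4.2 of Grimmett's \emph{Percolation} and used as a black box in the percolation discussion of Section~VI. There is therefore nothing in the paper to compare your argument against.

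That said, your sketch is essentially the standard textbook proof. The per-vertex identity $K(A)=\sum_{v\in A}1/\card{C_A(v)}$ followed by an ergodic limit for $v\mapsto 1/\card{C(v)}$, together with a boundary-error estimate separating bulk vertices from those near $\partial A$, is exactly how Grimmett proves the result. Your diagnosis of the main technical point---controlling the contribution of the restricted infinite cluster in the supercritical phase---is accurate; Grimmett handles it by a direct estimate rather than the full shape theorem, but the mechanism is the one you describe. One minor correction: your closing remark that Grimmett uses the subadditive ergodic theorem is not quite right; his argument for Theorem~4.2 is the direct ergodic-plus-boundary route you outlined first, not a subadditivity argument. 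Also note that positivity of $\kappa_p$ requires $p<1$, which is implicit here (and indeed the paper later uses that $\kappa_p\to 0$ as $p\to 1$).
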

\begin{proposition}[{\cite[Theorem 8.65]{Grimmett}}]
	\label{prop:sizeClusters}
	Let $C$ be a finite-size connected component containing an arbitrary vertex (e{.}g{.} the origin) in a $d$-dimensional rectangular lattice with edge-percolation probability $p$.
	For $p > p_c$, there exists a positive $\eta_p \in \R$ such that
	\begin{align}
		\Pr_{p}\big(|C| = s\big) \;\le\; \exp\Big( \!-\eta_p  s^{(d-1)/d} \Big).
	\end{align}
\end{proposition}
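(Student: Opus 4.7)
The plan is to combine an isoperimetric inequality in $\Z^d$ with a Peierls-type contour argument, so that any finite cluster containing the origin is forced to be encased by a large surface of closed edges whose probability decays exponentially in its area. Concretely, I would fix the event $\{|C|=s,\, |C|<\infty\}$, observe that conditional on this event every edge in the edge boundary $\partial_e C$ of the cluster must be closed, and then estimate $\Pr_p(|C|=s)$ by summing over the possible cluster shapes.

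Step one: prove (or invoke) the lattice isoperimetric inequality that any finite connected subgraph $A\subset\Z^d$ containing the origin satisfies $|\partial_e A|\ge c_d\,|A|^{(d-1)/d}$ for some constant $c_d>0$; this is the geometric ingredient which produces the $s^{(d-1)/d}$ exponent. Step two: for each ``lattice animal'' $A$ of size $s$ containing the origin, bound $\Pr_p(C=A)\le (1-p)^{|\partial_e A|}$, since the edges of $\partial_e A$ are independent and must all be closed. Step three: invoke the standard enumeration bound for animals, namely that the number of connected subgraphs of $\Z^d$ of size $s$ containing the origin is at most $\lambda_d^{\,s}$ for some finite $\lambda_d$. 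Putting these together yields
\begin{equation}
  \Pr_p(|C|=s) \;\le\; \lambda_d^{\,s}\,(1-p)^{c_d s^{(d-1)/d}}.
\end{equation}

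The main obstacle is precisely this last step: the animal-count factor $\lambda_d^{\,s}$ grows faster in $s$ than the surface decay $(1-p)^{c_d s^{(d-1)/d}}$ shrinks, so the naive bound is useful only when $1-p$ is very small (essentially, only close to $p=1$), whereas the proposition asserts decay for all $p>p_c$. To bridge the gap from ``$p$ near $1$'' to ``$p>p_c$'', I would carry out a renormalization/block argument: partition $\Z^d$ into cubes of side $N$, declare a block ``good'' if it contains a crossing cluster that connects to neighbouring good blocks, and show that for any $p>p_c$ the induced block process is supercritical with block-occupation probability $p'(N)\to 1$ as $N\to\infty$ (this uses sharp-threshold/Grimmett--Marstrand-type results in dimensions $d\ge 3$, and the contour duality in $d=2$). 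Choosing $N$ large enough that $\lambda_d^{\,1/N}(1-p'(N))^{c_d'} < 1$, the finite-cluster event $\{|C|=s,\,|C|<\infty\}$ on the original lattice forces a surface of ``bad'' blocks of area at least $\Omega(s^{(d-1)/d}/N^{d-1})$, and the block estimate then yields the desired $\exp(-\eta_p\, s^{(d-1)/d})$ bound.

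Everything else is bookkeeping: verifying that the geometric surface produced by the block renormalization is genuinely of surface-area order $s^{(d-1)/d}$ (this is where the isoperimetric inequality re-enters, now at the block scale), and extracting an explicit $\eta_p>0$ depending on $p$ and $d$. The delicate point — and the reason the result is genuinely nontrivial rather than a one-line Peierls bound — is controlling the block-percolation probability $p'(N)$ uniformly for $p>p_c$; I would treat this as the black-box input supplied by supercritical renormalization theory and then assemble the above ingredients to obtain the stated tail estimate.
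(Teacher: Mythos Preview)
The paper does not prove this proposition at all: it is stated as a citation of Theorem~8.65 in Grimmett's \emph{Percolation} and used as a black box in the subsequent estimate on $\Exp_p[\log\dim(M)]$. So there is no ``paper's own proof'' to compare against; your proposal is simply far more than what the paper supplies.

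That said, your sketch is a faithful outline of the standard proof strategy for this result in the percolation literature. The structure --- isoperimetric inequality to produce the $(d-1)/d$ exponent, Peierls-type contour bound which works only near $p=1$, and then static renormalization (Grimmett--Marstrand in $d\ge 3$, duality in $d=2$) to push the result down to all $p>p_c$ --- is correct, and you have correctly identified the genuine difficulty: the naive animal-count bound $\lambda_d^s(1-p)^{c_d s^{(d-1)/d}}$ is useless unless $p$ is close to $1$, so the renormalization step is not optional but essential. For the purposes of this paper, however, all of this is background; the authors simply quote the result.
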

Both $\kappa_p$ and $\exp(-\eta_p)$ in the propositions above are analytic for $p > p_c$, and thus must converge to $0$ in the limit $p \to* 1$.
We may thus describe an upper bound on the dimension of $M$ as follows, for $A$ a large cube containing $n \gg 1$ spins.
For $p > p_c$, there is almost surely a unique maximum-size component $A_0$ of the lattice which is connected by entangled edges: because the percolation probability $\theta_p$ is strictly positive (by definition) for $p > p_c$, we will have $|A_0| = \theta_p n$ on average.
Each subsystem $A_0, A_1, \ldots, A_k \subset A$ which is connected by entangled edges induces a natural Hamiltonian $H_{A_j}$ which has a kernel of dimension at most $|A_j| + 1$: we may bound $\dim(M)$ by noting that
\begin{align}
		M
	\subset
		\ker(H_{A_0}) \ox \ker(H_{A_1}) \ox \cdots \ox \ker(H_{A_k})\;,
\end{align}
as in \eqref{eqn:tensorDecomposeKernels-b}.
This allows us to obtain the bound
\begin{align}
		\log\dim(M)
	\;\;\le&\;\;
		\sum_{j \ge 0} \; \log\dim(\ker(H_{A_j}))
	\notag\\[0.5ex]\le&\;\;
		\sum_{j \ge 0} \; \log\big(|A_j| + 1\big).
\end{align}
By \autoref{prop:countClusters}, the expected number of components $k$ grows like $\kappa_p n$ for some $\kappa_p > 0$ as $n \to* \infty$; using the probability bound on the typical finite component size of \autoref{prop:sizeClusters} as the probability of an indistinguished connected component having size $s$, we obtain the upper bound
\begin{align}
		\Exp_p &\Big[\log\dim(M)\Big]
\notag\\[0.5ex]\le&\;
		\log\big(|A_0| + 1) + \sum_{j \ge 1} \log\big(|A_j| + 1\big)
\notag\\[0.5ex]\le&\;
		\log(\theta_p n + 1) +\;
		\sum_{j = 1}^{\kappa_p n} \left[
		\sum_{s = 1}^{\infty}
			\frac{\log(s+1)}{\exp\big(\eta_p  s^{(d-1)/d}\big)} 	\right]
		\notag\\[0.5ex]\le&\;\;
				\log(\theta_p n + 1) + \kappa_p C_p n\;, 
\end{align}
where $C_p$ is the sum in square brackets (which is small for $\exp(-\eta_p)$ small).

As $\log\dim(M)$ is also the logarithm of the maximum Schmidt rank of any state with respect to the bipartition into $A$ and $V \setminus A$ for the lattice $V$, the amount of entanglement scales with the logarithm of the size of the cube $A$, with a small and tunable linear correction, for $p \approx 1$.
In this sense, frustration-free Hamiltonians in such a ``percolated'' product-model on rectangular lattices resemble frustration-free natural Hamiltonians in the expected case as $p \to* 1$.

\section{Almost frustration-free Hamiltonians}
\label{sec:almostUnfrustrated}

The method of efficiently simulating ground state manifolds of frustration-free Hamiltonians can be 
extended to serve as a method to simulate almost-frustration-free Hamiltonians, albeit in a non-certified way.
Consider a Hamiltonian 
\begin{equation}
	H=H_0 + \lambda H_1
\end{equation}
for $\lambda\in \rr$ playing the role of a small perturbation, where
\begin{equation}
		 		H_0 \;=\! \sum_{\ens{a,b} \subset V}\! h_{a,b}
\end{equation}
is exactly frustration-free (i{.}e{.} in the sense defined in \sectionref{sec:prelimsFrustrationFree}), and $H_1$ is a small local perturbation. 
Then, one can still efficiently compute
\begin{equation}
	\inf_{\ket{\Phi}\in M} \bra{\Phi} H  \ket{\Phi} ,
\end{equation}
where $M$ denotes the (in general, degenerate) ground state manifold of $H_0$.
Again, we may characterize $M$ as the image of the low-dimensional subspace $\Symm(\cH\sox{n_c})$ under a tree-tensor network, as described in \sectionref{sec:sampling}; $H_1$ being a local Hamiltonian, each term of the infumum above can be
efficiently computed using a suitable basis of $\Symm(\cH\sox{n_c})$.
This is a variational approach that will always provide an upper bound to the 
true ground state energy.

In this way, one approximates the ground state manifold of an almost frustration-free
Hamiltonian with the ground state manifold of an
exactly frustration-free one. The interesting aspect here is that
one can consider the {\it image of an entire large subspace under a tensor network}.
In practice, one would think of a Hamiltonian $H_U$ near to a realistic one $H$, where one may show that $H_U$ is frustration-free (which may be efficiently verified using the algorithm of Ref.~\cite{Bravyi06}, as outlined in \autoref{sec:reviewBravyi}), and then approximate the ground state of the full Hamiltonian.
This approach appears to be particularly suitable for slightly frustrated Hamiltonians
reminding of Shastry-Sutherland type \cite{SS}
models, with ---  in a cubic lattice and a frustration-free Hamiltonian --- an additional  
bond along the main diagonal renders
the model frustrated.

\section{Summary}

In this work, we have investigated in great detail a class of models whose ground-state manifolds can be completely identified: those of physically realistic frustration-free models of spin-$1/2$ particles on a general lattice.
We have seen that the entire ground state manifold can be parametrized by means of tensor networks applied to symmetric subspaces, by essentially undoing a sequence of isometric reductions.
We also found that any ground state of such a system satisfies an area law, and hence contains little entanglement.
This is a physically meaningful class of physical models --- beyond the case of free models --- for  which such an area law behaviour can be rigorously proven.
It is the hope that the idea of considering entire subspaces under tensor networks, and eventually looking at the performance when being viewed as a numerical method, will give rise to new insights into almost frustration-free models.

\section{Acknowledgements}

This work has been supported by the EU (QESSENCE, MINOS, COMPAS) and the EURYI award scheme. We would like to 
thank D.\ Gross and S. Michalakis for discussions. Part of this work was done while JE was visiting the KITP in Santa Barbara
as participant of the Quantum Information Program. 

\bibliographystyle{abbrv}

\begin{thebibliography}{alpha}


\bibitem{Review}
	J.\ Eisert, M.\ Cramer, and M.~B.\ Plenio, Rev.\ Mod.\ Phys. {\bf 82}, 277 (2010).
			 	
\bibitem{Scholl}
	U.\ Schollwoeck,
	Rev.\ Mod.\ Phys.\ {\bf 77}, 259 (2005).
	
\bibitem{2d}
	F.\ Verstraete, J.\ I.\ Cirac, and V.\ Murg,
	Adv.\ Phys.\ {\bf 57},143 (2008).	
	
\bibitem{Wilczek}
	C.\ Holzhey, F.\ Larsen and F.\ Wilczek, Nucl.\ Phys.\ B {\bf 424}, 443 (1994).

\bibitem{Bombelli}		
	L.\ Bombelli, R.~K.\ Koul, J.\ Lee, and R.\ Sorkin,
	Phys.\ Rev.\ D
	{\bf 34}, 373 (1986).

\bibitem{Srednicki}		
	M.\ Srednicki, 
	Phys.\ Rev.\ Lett.\ {\bf 71}, 666 (1993).
	
\bibitem{Harmonic}		
	K.~M.~R.\ Audenaert, J.\ Eisert, M.~B.\ Plenio, and R.~F.\ Werner,
	Phys.\
	Rev.\ A {\bf 66}, 042327 (2002).

\bibitem{Latorre}		
	G.\ Vidal, J.~I.\ Latorre, E.\ Rico, and A.\ Kitaev, 
	Phys.\ Rev.\
	Lett.\ {\bf 90}, 227902 (2003).

\bibitem{Jin}		
	B.-Q.\ Jin and V.\ Korepin,  J.\ Stat.\ Phys.\ {\bf 116}, 79 (2004).

\bibitem{Area1}
	M.~B.\ Plenio, J.\ Eisert, J.\ Dreissig, and M.\ Cramer,	
	Phys.\ Rev.\ Lett.\ {\bf 94}, 060503 (2005).
	
\bibitem{Area2}
	M.\ Cramer and J.\ Eisert, New J.\ Phys.\ {\bf 8}, 71 (2006).
		
\bibitem{Cardy}		 
	P.\ Calabrese and J.\ Cardy, 
	J.\ Stat.\ Mech.\ P06002 (2004).
	
\bibitem{Fermi1}
	M.~M.\ Wolf, Phys.\ Rev.\ Lett.\ {\bf 96}, 010404 (2006).
		
\bibitem{Fermi2}
	M.\ Cramer, J.\ Eisert, and M.~B.\ Plenio,
	Phys.\ Rev.\ Lett.\ {\bf 98}, 220603 (2007).
			
\bibitem{Fermi3}
	D.\ Gioev and I.\ Klich, Phys.\ Rev.\ Lett.\ {\bf 96}, 100503 (2006).	

\bibitem{Hastings}
	M.~B.\ Hastings, JSTAT P08024 (2007).
	
\bibitem{Masanes}
	Ll.\ Masanes, Phys.\ Rev.\ A {\bf 80}, 052104 (2009).	

\bibitem{NPH}
	More precisely, one can construct problems of this sort which are \QMA-complete.
	\QMA\ is the class of problems which one obtains if one generalizes the notorious complexity 
	class \NP, to also allow algorithms which act on quantum states and which have a bounded probability of failure~\cite{Watrous}.
	An efficient and deterministic algorithm for sampling the ground states for the models presented in 
	Ref.~\cite{Power}  would thus imply $\P = \NP$.
	
\bibitem{Watrous}
	J.\ Watrous,  arXiv:0804.3401. 
	
\bibitem{Power}		
	D.\ Aharonov, D.\ Gottesman, S.\ Irani, and J.\ Kempe,
	Comm.\ Math.\ Phys.\ {\bf 287}, 41 (2009).
	
\bibitem{Shor}
	R.\ Movassagh, E.\ Farhi, J.\ Goldstone, D.\ Nagaj, T.~J.\ Osborne, and P.~W.\ Shor,
	Phys.\ Rev.\ A {\bf 82}, 012318 (2010).
		
\bibitem{Bravyi06}	
	S.~Bravyi, quant-ph/0602108.

\bibitem{OurPRL}
	N.\ de Beaudrap, M.\ Ohliger, T.\ J.\ Osborne, and J.\ Eisert,
	Phys.\ Rev.\ Lett.\ {\bf 105}, 060504 (2010).
		
\bibitem{NoGoMBQC}
	J.~Chen, X.~Chen, R.~Duan, Z.~Ji, B.~Zeng,
	arXiv:1004.3787.

\bibitem{Giov}
	P.\ Silvi, V.\ Giovannetti, S.\ Montangero, M.\ Rizzi, J.~I.\ Cirac, and R.\ Fazio,
	Phys.\ Rev.\ A {\bf 81}, 062335 (2010).		

\bibitem{APT79}		
	B.~Aspvall, M.~F.~Plass, and R.~E.~Tarjan,
	Inf.\ Proc.\ Lett.\ \textbf{8}, 121 (1979). 

\bibitem{Karp72} 		
	R.\ M.\ Karp, {\it Complexity of Computer Computations},  pp.\ 85--103, 
	in R.\ E.\ Miller and J.\
	W.\ Thatcher, Eds., {\it The IBM Research Symposia Series} (Plenum Press, 
	New York, 1972).
		
\bibitem{Footnote}
	Except where explicitly noted, references to the ranks or kernels of $2$-local operators $h_{a,b}$ 
	(acting on spins $a$ and $b$) are to be understood as applying to these operators as they act on 
	$\C^2 \otimes \C^2$ (i.e.,~on the spins $a$ and $b$ alone).
		
\bibitem{kQSat}
	A.\ Ambainis, J.\ Kempe, and O.\ Sattath, 
	arXiv:0911.1696.
	
\bibitem{RandomQSat}	
	C.~R.\ Laumann, A.~M.\ L{\"a}uchli, R.\ Moessner, A.\ Scardicchio, and S.~L.\ Sondhi,
	arXiv:0910.2058, Phys.\ Rev.\ A {\bf 81}, 062345 (2010).
			
\bibitem{BravyiQSat}			
	S.\ Bravyi, C.\ Moore, and A.\ Russell,
	arXiv:0907.1297.

\bibitem{Zittartz}
	A.\ Kl{\"u}mper, A.\ Schadschneider, and J.\ Zittartz,
	Z.\ Phys.\ B {\bf 87}, 281 (1992).
				
\bibitem{MERA}
	G.\ Vidal, Phys.\ Rev.\ Lett.\ {\bf 99}, 220405 (2007).

\bibitem{Footnote2}
	We effectively ignore the presence of single-spin terms in much of our analysis. However, such terms 
	impose constraints on the ground-state manifold. In particular, they imply that the Hamiltonian $H$ is frustration-free only if, for any ground state vector 
	$\ket{\Phi}$ of $H$, the spins on which such terms act are disentangled from the rest of the system.
	
\bibitem{Footnote3}
	In the case that $H$ contains single-spin terms, the resulting 
	Hamiltonian may still be frustrated, e{.}g{.} if there are no product states which are also in the kernel of all of the single-spin terms.
	Extending the analysis of Ref.~\cite{Bravyi06}, this may be efficiently determined 
	with no additional effort, by incorporating the single-spin constraints 
	when constructing a product state in the kernel of the Hamiltonian terms, and verifying that these constraints may be simultaneously satisfied.
	
\bibitem{Factorization}			
	R.\ Rossignoli, N.\ Canosa, and J.~M.\ Matera,
	Phys.\ Rev.\ A {\bf 80}, 062325 (2009).


\bibitem{Tarrach}
	A.\ Sanpera, R.\ Tarrach, and G.\ Vidal, quant-ph/9707041.

\bibitem{SM}
	J.\ Eisert and H.\ J.\ Briegel, Phys.\ Rev.\ A {\bf 64},  022306 (2001).	

\bibitem{Grimmett}
	G.\ Grimmett, \emph{Percolation} (Springer, New York, 1999).
	
\bibitem{Kieling}
	K.\ Kieling and J.\ Eisert, 
	{\it Percolation in quantum computation and communication}, in 
	{\it Quantum percolation and breakdown}, 
	Lecture Notes in Physics (Springer, Heidelberg, 2008),
	arXiv:0712.1836.

\bibitem{Footnote4}
	By the analysis of Ref.~\cite{RandomQSat}, one expects that such Hamiltonians 
	will be frustrated with probability $1$ if the ``natural'' terms form more than one 
	cycle in the lattice, which occurs with high probability if $p > 1\!/2$ (or more precisely, 
	if $p$ is greater than the percolation threshhold for the latice model).

\bibitem{SS}
	B.~S.\ Shastry and B.\ Sutherland, Physica
	{\bf 108}B, 1069 (1981).	
	
\end{thebibliography}

\appendix
\section{Technical lemmas}
\label{apx:techincalLemmas}

We now supply the proofs of technical lemmata required in the preceding sections.

	\begin{lemma*}[\autoref{lemma:nullOperatorProducts}]
		For two-spin state vectors $\ket{\psi}$ and $\ket{\phi}$, we have 
		$\big( \bra{\psi} \ox  \id \big) \big(  \id \ox \ket{\phi} \big) = 0$ only if both $\ket{\psi}$ and $\ket{\phi}$ are product states.
	\end{lemma*}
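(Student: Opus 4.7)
The plan is to reduce the vanishing of the operator $M := (\bra{\psi}\ox\id)(\id\ox\ket{\phi})$ to a statement about a product of two $2\times 2$ matrices, using the standard identification under which a two-qubit state vector is entangled precisely when its coefficient matrix has full rank. The conclusion then follows from the elementary fact that if one factor in a product of $2\times 2$ matrices is invertible and the product vanishes, the other factor must vanish.

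Concretely, I would first place $\ket{\psi}$ on spins $(1,2)$ and $\ket{\phi}$ on spins $(2,3)$, so that the middle spin is unambiguously the contracted one. Introduce coefficient matrices $\Psi$ and $\Phi$ via $\ket{\psi}=\sum_{i,j}\Psi_{ij}\ket{i}\ket{j}$ and $\ket{\phi}=\sum_{j,k}\Phi_{jk}\ket{j}\ket{k}$. A one-line computation then identifies $M$, viewed as a linear map from spin $1$ to spin $3$, with the matrix product $\Phi^{T}\Psi^{\dagger}$ (up to the natural choice of bases on input and output).

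The key observation is that a two-qubit state vector is entangled precisely when its coefficient matrix has rank $2$, equivalently is invertible. Now suppose $M=0$. If $\ket{\psi}$ were entangled, $\Psi^{\dagger}$ would be invertible, and multiplying $\Phi^{T}\Psi^{\dagger}=0$ on the right by $(\Psi^{\dagger})^{-1}$ yields $\Phi^{T}=0$, forcing $\ket{\phi}=0$ and contradicting the hypothesis that $\ket{\phi}$ is a (nonzero) state vector. Symmetrically, if $\ket{\phi}$ were entangled, $\Phi^{T}$ would be invertible and an analogous cancellation on the left forces $\ket{\psi}=0$. Both alternatives being impossible, both $\ket{\psi}$ and $\ket{\phi}$ must be product states.

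There is no real obstacle here: once the matrix identification is in place, everything reduces to invertibility of $2\times 2$ matrices together with the absence of zero divisors on either side of a nonvanishing product. The only care required is in setting up the index conventions so that the middle spin really is the one being contracted, and in confirming that entanglement of a two-qubit state exactly coincides with the coefficient matrix having rank $2$.
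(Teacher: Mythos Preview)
Your argument is correct. The identification $M=\Phi^{T}\Psi^{\dagger}$ is exactly right, and the fact that a two-qubit vector is entangled precisely when its $2\times 2$ coefficient matrix is invertible immediately gives the conclusion by cancellation.

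This is a genuinely different route from the paper's proof. The paper works via Schmidt decompositions $\ket{\psi}=\sum_r \mu_r\ket{e_r}\ket{f_r}$ and $\ket{\phi}=\sum_s \nu_s\ket{g_s}\ket{h_s}$, arranges the labels so that $\bracket{f_r}{g_r}\ne 0$, and then reads off that the vanishing of the diagonal matrix elements forces $\mu_r\nu_r=0$ for each $r$, hence one Schmidt coefficient of each state must vanish. Your approach bypasses the Schmidt bookkeeping entirely: by passing to coefficient matrices you convert the problem into the vanishing of a product of $2\times 2$ matrices, after which invertibility of either factor does the work in one stroke. The paper's argument is in effect the same statement filtered through the singular value decomposition (Schmidt coefficients are singular values of the coefficient matrix), but it requires the additional step of justifying the label choice $\bracket{f_r}{g_r}\ne 0$. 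Your version is cleaner and makes the role of entanglement as full rank more transparent; the paper's version has the minor advantage of exhibiting the operator explicitly in orthonormal bases, which matches how $M_{j,k}$ is used downstream in the proof of \autoref{lemma:isometricContractionPreservingProducts}.
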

 
		\begin{proof}
			Consider Schmidt decompositions 
			\begin{equation}
				\ket{\psi} = \sum_r \mu_r \ket{e_r}\ket{f_r}, \ket{\phi} = \sum_s \nu_s \ket{g_s}\ket{h_s}, 
			\end{equation}
			where without loss of generality we may require $\bracket{f_r}{g_r} \ne 0$ by an appropriate choice of labels.
			Then we have
			\begin{equation}
				\bigl( \bra{\psi}  \ox  \id \bigr) \bigl(  \id \ox \ket{\phi} \bigr)
				=
					\sum_{r,s} \bigl( \mu_r \nu_s \bracket{f_r}{g_s} \!\bigr) \ket{e_r} \bra{h_s}	\;,
			\end{equation}
			which is only zero if $\mu_0 \nu_0 = \mu_1 \nu_1 = 0$, which implies $\mu_r = 0$ and $\nu_{1-r} = 0$ for some value of $r \in \ens{0,1}$.
		\end{proof}

	\begin{lemma*}[\autoref{lemma:isometricContractionPreservingProducts}]
		Let $U: \cH \to \cH \ox \cH$ be an isometry which is not a product operator.
		Let $\eta \ge 0$ be an operator on two spins, and \textup{$\eta' = \big(U\herm \ox \id_2) (\id_2 \ox \eta) (U \ox \id_2)$}.
		If $\eta'$ is not of full rank, then $\eta'$ is a product operator if and only if $\eta$ is a product operator.
	\end{lemma*}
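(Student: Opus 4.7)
The plan is to handle the two implications separately. The ``only if'' is immediate: a product $\eta = \eta_v \ox \eta_a$ gives $\eta' = \bigl[U\herm(\id \ox \eta_v) U\bigr] \ox \eta_a$, again a product. For the nontrivial converse, assume $\eta \ge 0$, $\eta' = X_u \ox Y_a$, and $\eta'$ is not of full rank, i.e.\ $\rank(X_u)\rank(Y_a) < 4$. The strategy is to use the kernels of $X_u$ and $Y_a$ to extract support restrictions on $\eta$, and then invoke the hypothesis on $U$ to collapse $\eta$ into a product.

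Write $U\ket{x} = \ket{\psi_x} = \sum_i \ket{i}_u \ket{\phi_{x,i}}_v$. The key auxiliary fact is that $U$ is a product operator if and only if the four vectors $\ket{\phi_{x,i}}$ for $x, i \in \ens{0,1}$ all lie in a common one-dimensional subspace of $\cH$; since the hypothesis rules this out, they span $\cH$. Now for any $\ket{y} \in \ker(Y_a)$ the operator $\sigma_y := \bra{y}_a \eta \ket{y}_a \ge 0$ on $\cH$ satisfies $U\herm(\id \ox \sigma_y)U = \bra{y}_a \eta'\ket{y}_a = 0$; writing $\sigma_y = \tau\herm\tau$, this becomes $\tau \ket{\phi_{x,i}} = 0$ for every $x,i$, whence $\tau = 0$ and $\sigma_y = 0$. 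Thus $\cH \ox \ker(Y_a) \subset \ker(\eta)$. Symmetrically, for $\ket{y'} \in \ker(X_u)$ a direct expansion gives $\bra{y'}_u \eta'\ket{y'}_u = \sum_i (\bra{\phi_{y',i}}_v \ox \id_a)\,\eta\,(\ket{\phi_{y',i}}_v \ox \id_a) = 0$; each summand is positive semidefinite, so each vanishes, placing $\ket{\phi_{y',i}}_v \ox \cH \subset \ker(\eta)$ for every $i$.

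It remains to case-split on $(\rank(X_u), \rank(Y_a))$ under $\rank(X_u)\rank(Y_a) < 4$. If $\rank(Y_a) \le 1$, the first conclusion forces $\img(\eta) \subset \cH \ox \img(Y_a)$, and since $\img(Y_a)$ is at most one-dimensional, $\eta$ is forced into the form $\sigma_v \ox \ket{\zeta}\bra{\zeta}_a$ (or $0$), a product. Otherwise $\rank(Y_a) = 2$ and $\rank(X_u) \le 1$; let $\ket{y'_0}$ span $\ker(X_u)$ and set $\ket{\mu_i} := \ket{\phi_{y'_0,i}}_v$. Either $\ket{\mu_0}, \ket{\mu_1}$ are linearly independent and therefore span $\cH$, forcing $\eta = 0$; or they are collinear with some $\ket{\mu}$, so $\ket{\mu}_v \ox \cH \subset \ker(\eta)$ and $\eta = \ket{\mu^\perp}\bra{\mu^\perp}_v \ox \tilde\eta_a$ for some $\tilde\eta_a \ge 0$, once again a product.

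The main obstacle I anticipate is bookkeeping the two support conclusions jointly, since neither alone suffices: the $Y_a$-kernel step handles low $\rank(Y_a)$, the $X_u$-kernel step handles low $\rank(X_u)$, and the hypothesis $\rank(X_u)\rank(Y_a) < 4$ is precisely what guarantees that at least one of the two kernels is non-trivial. Full-rank $\eta'$ must genuinely be excluded: for suitable $U$ one can exhibit non-product $\eta \ge 0$ (e.g.\ a small perturbation of $\id$ in the $\ket{01}\bra{10}$ direction) whose contraction $\eta'$ is a product of full rank, so the rank hypothesis is essential rather than cosmetic.
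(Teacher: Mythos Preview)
Your argument is correct, with two cosmetic slips that do not affect the logic: you label the easy direction ``only if'' when it is the ``if'' direction of the biconditional, and your auxiliary ``if and only if'' characterization of product isometries is in fact only an ``if'' (the form $U=\ket{\psi}\ox V$ is a product operator whose vectors $\ket{\phi_{x,i}}$ need not be collinear), but you only invoke the valid direction, namely that non-product $U$ forces the $\ket{\phi_{x,i}}$ to span $\cH$.

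Your route is genuinely different from the paper's. The paper expands $\eta$ spectrally as $\sum_k \lambda_k \ket{\phi_k}\bra{\phi_k}$, and in the case where the $u$-factor of $\eta'$ has a kernel vector $\ket{\gamma}$, shows that each operator $T_k=(\bra{\Gamma}\ox\id)(\id\ox\ket{\phi_k})$ with $\ket{\Gamma}=U\ket{\gamma}$ vanishes; it then invokes \autoref{lemma:nullOperatorProducts} to force both $\ket{\Gamma}$ and every $\ket{\phi_k}$ to be product vectors sharing a common $v$-factor, whence $\eta$ is product. You bypass \autoref{lemma:nullOperatorProducts} entirely: expanding $U\ket{y'}=\sum_i\ket{i}\ket{\mu_i}$ and using positivity termwise pushes $\ket{\mu_i}_v\ox\cH_a$ into $\ker(\eta)$ directly, after which a dichotomy on whether $\ket{\mu_0},\ket{\mu_1}$ are independent finishes the case. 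What you gain is self-containment (no auxiliary lemma) and a cleaner case structure driven by the ranks of $X_u$ and $Y_a$; the paper's approach is more structural in that it tracks the entanglement of the eigenvectors $\ket{\phi_k}$ of $\eta$ explicitly. The $a$-factor case is handled essentially the same way in both proofs.
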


		\begin{proof}
		Suppose that $\eta'$ is not full rank, and is a product operator.
		As it is positive semidefinite, $\eta'$ must either have the form 
		$\ket{\alpha}\bra{\alpha} \ox \eta''$ or $\eta'' \ox \ket{\alpha}\bra{\alpha}$ for some $\ket{\alpha} \in \cH$.
		In particular, there must exist a state vector $\ket{\gamma} \in \cH$ such that one of
		\begin{subequations}
		\begin{align}
			\bigl( \bra{\gamma} \ox  \id \bigr)  \eta'  \bigl( \ket{\gamma} \ox  \id \bigr) \;=&\; 0	\quad\text{or}
		\\
			\bigl(  \id \ox \bra{\gamma} \bigr)  \eta'  \bigl(  \id \ox \ket{\gamma} \bigr) \;=&\; 0		 
		\end{align}
		\end{subequations}
		holds.
		Decompose $\eta$ in its spectral decomposition,
		\begin{gather}
		 		\eta
			\;\;=\;\;
				\sum_k \lambda_k \ket{\phi_k}\bra{\phi_k}
		\end{gather}
		for $\lambda_k > 0$.
		Suppose that $\big( \bra{\gamma} \ox  \id \big) \eta' \big( \ket{\gamma} \ox  \id \big) = 0$: if we let $\ket{\Gamma} = U \ket{\gamma}$, we have 
		\begin{align}
				0 \;=&\;
	 			\bigl( \bra{\gamma} U\herm \ox  \id \bigr)  ( \id \ox \eta)  \bigl( U \ket {\gamma} \ox  \id \bigr)
			\notag\\=& \;
				\bigl( \bra{\Gamma} \ox  \id \bigr)
				\!\paren{ \sum_k \lambda_k  \id \ox \ket{\phi_k}\bra{\phi_k} }\!
				\bigl( \ket{\Gamma} \ox  \id \bigr)
			\notag\\=&\;\;
				\sum_k \lambda_k   T_k T_k\herm	\;,
		\end{align}
		where we define $T_k = \big( \bra{\Gamma} \ox  \id \big) \big(  \id \ox \ket{\phi_k} \big)$.
		By \autoref{lemma:nullOperatorProducts}, each operator $T_k$ is zero only if both $\ket{\Gamma}$ and $\ket{\phi_k}$ are 
		both product vectors; we may then 
		decompose $\ket{\Gamma} = \ket{\sigma}\ket{\tau}$ 
		and $\ket{\phi_k} = \ket{\tau'}\ket{\phi'_k}$, where we 
		require $\bracket{\tau}{\tau'} = 0$ for all $k$.
		We then have
		\begin{gather}
		 		\eta
			\; =\; 
				\ket{\smash{\tau'}}\bra{\smash{\tau'}} \ox \sqparen{ \sum_k \lambda_k   \ket{\smash{\phi'_k}}\bra{\smash{\phi'_k}} }.
		\end{gather}
		On the other hand, if $\big(  \id \ox \bra{\gamma} \big)  \eta  \big(  \id \ox \ket{\gamma} \big) = 0$, we obtain
		\begin{align}
			0  =&\;
	 			\bigl( U\herm \ox \bra{\gamma} \bigr)  \eta  \bigl( U \ox \ket {\gamma} \bigr)
			\notag\\[0.5ex]=& \;
				\sum_k \lambda_k   U\herm \bigl(  \id \ox \ket{\phi'_k} \bigr)\bigl(  \id \ox \bra{\phi'_k} \bigr) U \;,
		\end{align}
		for single-spin states $\ket{\phi'_k} = \big(  \id \ox \bra{\gamma} \big) \ket{\phi_k} $.
		We then require $U\herm \big(  \id \ox \ket{\phi'_k} \big) = 0$ for each $k$; because $U$ cannot be decomposed as $U' \ox \ket{u}$ for any state $\ket{u} \in \cH$, this implies that the vectors $\ket{\phi'_k}$ themselves are zero.
		Thus, $\ket{\phi_k} = \ket{\alpha_k}\ket{\beta}$ for some states $\ket{\alpha_k} \in \cH$ , and where $\bracket{\gamma}{\beta} = 0$.
		We then have
		\begin{gather}
		 		\eta
			\;\;=\;\;
				\sqparen{ \sum_k \lambda_k \ket{\smash{\alpha'_k}}\bra{\smash{\alpha'_k}}} \ox \ket{\beta}\bra{\beta} .
		\end{gather}
		In either case, $\eta'$ is a product operator only if $\eta$ is a product operator; the converse holds trivially.
 		\end{proof}


\end{document}